\documentclass[conference]{IEEEtran}

\pagestyle{plain}

\ifCLASSINFOpdf
  \usepackage[pdftex]{graphicx}
\else
\fi

\usepackage{hyperref}

\usepackage{comment}
\usepackage[linewidth=0.5pt]{mdframed}
\usepackage{multicol}

\usepackage{enumitem}
\usepackage{subfig}
\usepackage[]{algorithm}
\usepackage[noend]{algpseudocode}
\usepackage[utf8x]{inputenc} 
\usepackage{multirow}
\usepackage{cite}
\usepackage{amsmath,amssymb,amstext, amsthm} 
\usepackage{url}
\usepackage[normalem]{ulem}

\newtheorem{theorem}{Theorem}
\newtheorem{definition}{Definition}
\newtheorem{lemma}{Lemma}
\newcommand{\citeauthor}[1]{\textcolor{red}{Add authors}}

\usepackage[usenames,svgnames,table]{xcolor}
\newcounter{margin}
\setcounter{margin}{1}

\colorlet{scolor}{DarkSlateBlue}
\colorlet{fcolor}{Green}




\newcommand*{\Fitpage}[1]{\resizebox{\linewidth}{!}{$#1$}}

\newcommand{\IPPE}{\text{IPPE}}
\newcommand{\IDX}{I}
\newcommand{\FHIPPE}{\text{FHIPPE}}

\newcommand{\OSSE}{\text{OSSE}}
\newcommand{\CLRZ}{\text{CLRZ}}
\newcommand{\genVec}{\textsf{genPoly}}
\newcommand{\genToken}{\textsf{genPred}}

\newcommand{\Bernoulli}[1]{\textbf{Bern}(#1)}
\newcommand{\Geometric}[2]{\textbf{Geo$_{#1}$}(#2)}
\newcommand{\Binomial}[2]{\textbf{Bi}(#1, #2)}

\newcommand{\TPR}[0]{\texttt{TPR}}
\newcommand{\FPR}[0]{\texttt{FPR}}

\newcommand{\countermax}{{\texttt{ctr}_\texttt{max}}}  
\newcommand{\counteri}[1]{\texttt{ctr}_{#1}}   
\newcommand{\freqmax}{F_\texttt{max}}  
\newcommand{\sizemax}{S_\texttt{max}}  
\newcommand{\ap}[1][]{\Pi_{#1}}  
\newcommand{\aprand}[1][]{\tilde{\Pi}_{#1}}  
\newcommand{\apsamp}[1][]{\pi_{#1}}  
\newcommand{\searchp}[1][]{\Phi_{#1}}
\newcommand{\token}[1][]{\tau_{#1}}
\newcommand{\tracerand}{\tilde{T}}
\newcommand{\nmescliser}{\#\text{tok}}
\newcommand{\nmessercli}{\#\text{doc}}
\newcommand{\toksize}{\tau_{\text{size}}}
\newcommand{\docsize}{D_{\text{size}}}
\newcommand{\overhead}[1]{\text{COMM-OVER}_{#1}}
\newcommand{\computation}[1]{\text{COMP-COMP}_{#1}}

\newcommand{\IKK}{\text{IKK}}
\newcommand{\IKKS}{\text{IKK}^*}

\newcommand{\secretkey}{\ensuremath{sk}}
\newcommand{\ciphertext}{\ensuremath{ct}}
\newcommand{\searchtoken}{\ensuremath{st}}

\hyphenation{op-tical net-works semi-conduc-tor}

\begin{document}
%
\title{Obfuscated Access and Search Patterns in Searchable Encryption}

\author{\IEEEauthorblockN{Zhiwei Shang}
\IEEEauthorblockA{University of Waterloo\\
z6shang@edu.uwaterloo.ca}
\and
\IEEEauthorblockN{Simon Oya}
\IEEEauthorblockA{University of Waterloo\\
simon.oya@uwaterloo.ca}
\and
\IEEEauthorblockN{Andreas Peter}
\IEEEauthorblockA{University of Twente\\
a.peter@utwente.nl}
\and
\IEEEauthorblockN{Florian Kerschbaum}
\IEEEauthorblockA{University of Waterloo\\
florian.kerschbaum@uwaterloo.ca}}

\IEEEoverridecommandlockouts
\makeatletter\def\@IEEEpubidpullup{6.5\baselineskip}\makeatother
\IEEEpubid{\parbox{\columnwidth}{
		Originally published at:\\
    Network and Distributed Systems Security (NDSS) Symposium 2021\\
    21-24 February 2021, San Diego, CA, USA
}
\hspace{\columnsep}\makebox[\columnwidth]{}}

\maketitle

\begin{abstract}
Searchable Symmetric Encryption (SSE) allows a data owner to securely outsource its encrypted data to a cloud server while maintaining the ability to search over it and retrieve matched documents.
Most existing SSE schemes leak which documents are accessed per query, i.e., the so-called access pattern, and thus are vulnerable to attacks that can recover the database or the queried keywords.
Current techniques that fully hide access patterns, such as ORAM or PIR, suffer from heavy communication or computational costs, and are not designed with search capabilities in mind.
Recently, Chen et al.~(INFOCOM'18) proposed an obfuscation framework for SSE that protects the access pattern in a differentially private way with a reasonable utility cost.
However, this scheme always produces the same obfuscated access pattern when querying for the same keyword, and thus leaks the so-called search pattern, i.e., how many times a certain query is performed.
This leakage makes the proposal vulnerable to certain database and query recovery attacks.

In this paper, we propose $\OSSE$ (Obfuscated SSE), an SSE scheme that obfuscates the access pattern independently for each query performed.
This in turn hides the search pattern and makes our scheme resistant against attacks that rely on this leakage.
Given certain reasonable assumptions on the database and query distribution, our scheme has smaller communication overhead than ORAM-based SSE.
Furthermore, our scheme works in a single communication round and requires very small constant client-side storage.
Our empirical evaluation shows that $\OSSE$ is highly effective at protecting against different query recovery attacks while keeping a reasonable utility level.
Our protocol provides significantly more protection than the proposal by Chen et al.~against some state-of-the-art attacks, which demonstrates the importance of hiding search patterns in designing effective privacy-preserving SSE schemes.
\end{abstract}

\section{Introduction}
\label{section:intro}

Searchable Symmetric Encryption (SSE)~\cite{song2000practical} schemes allow a client to securely perform searches on an encrypted database hosted by a server. 
In a typical SSE scenario, the client first locally produces an encrypted version of the database and a search index, and outsources them to the server. 
Later, the client can issue queries for keywords that the server can securely run on the index, and retrieve the documents that match the query to decrypt them locally.

\ 

Even though the content of the queries and the documents are encrypted, during this interaction the server learns which documents are accessed, i.e., the \emph{access pattern}, and which queries are equal, i.e., the \emph{search pattern}. 
Most existing SSE schemes \cite{curtmola2011searchable, song2000practical, kamara2012dynamic, kamara2013parallel, cash2013highly, kamara2017boolean, bost2016ovarphiovarsigma} allow such leakage for performance considerations. 
However, recent studies \cite{islam2012access,cash2015leakage, zhang2016all} demonstrated that, with some prior knowledge of the outsourced database or a subset of the queries, an honest-but-curious server can recover the underlying keywords of queries with high accuracy, which violates the client's privacy.

There are different techniques that allow enhancing the privacy properties of SSE schemes, but they incur an utility cost which is typically a combination of communication overhead, extra computational complexity, and local client storage requirements.
Certain schemes, like those based on Oblivious RAM (ORAM)~\cite{goldreich1996software} or Private Information Retrieval (PIR)~\cite{chor1995private}, can fully hide the access pattern when reading a document from a database.
However, they incur a large communication and computation overhead, respectively, and are not specifically designed towards securely searching over an encrypted database (except for TWORAM~\cite{garg2016tworam}).
A recent framework by Chen et al.~\cite{chen2018differentially} protects access-pattern leakage in SSE by obfuscating the index of the database before outsourcing it.
This way, the server only learns obfuscated access patterns, making it harder to successfully carry out attacks on the client's privacy from such leakage.
However, despite its efficiency, this framework cannot hide \emph{search patterns} since the access pattern for each keyword is determined after outsourcing.
This search pattern leakage allows different practical attacks~\cite{islam2012access,liu2014search,pouliot2016shadow} to perform remakably well regardless of the access pattern obfuscation (see Sect.~\ref{section:evaluation}).

Motivated by this vulnerability of Chen et al.'s scheme~\cite{chen2018differentially}, in this work we propose $\OSSE$ (Obfuscated SSE), a new SSE scheme that protects \emph{both} the access and search patterns.
The main idea behind $\OSSE$ is that it produces a \emph{fresh obfuscation} per query, instead of just once when outsourcing the database, thus making it hard for the server to decide whether or not two queries are for the same keyword.
Our scheme allows to perform queries on the encrypted database and receive the matched documents \emph{in the same communication round} (TWORAM requires at least four rounds~\cite{garg2016tworam}).
Under some reasonable assumptions on the query and database distribution, $\OSSE$ achieves a lower communication overhead than TWORAM (e.g., only a small constant when the keyword distribution is uniform).
Our scheme relies on computation-heavy cryptographic techniques and thus its computational cost is considerable (e.g., it can require 30 minutes to run a query over $30\,000$ documents).
However, it requires very small constant client-side storage and the query process is parallelizable, which means that the server can process several queries in parallel to speed up the search, and return the results in a single round. 
Our experimental evaluation shows that obfuscating search patterns significantly reduces the performance of known practical query recovery attacks. 

Summarizing, our contributions are the following:
\begin{itemize}
 \item We propose $\OSSE$, an SSE that leverages Inner Product Predicate Encryption (IPPE)~\cite{shen2009predicate} whose main features are 
1) it obfuscates both access and search patterns; 
2) it runs queries in a single communication round, with an overhead that, depending on the database and query distribution, can be as small as a constant (three); 
3) its computational complexity is $O(n \log n/\log\log n)$, but can be significantly reduced with parallelization; 
and 4) $\OSSE$ requires a (very small) constant client-side storage, as opposed to ORAM-based SSE (TWORAM)~\cite{garg2016tworam} that requires $O(\log^2 n)$ storage.
 \item We instantiate the notion of differential privacy for queries and documents in SSE, which in turn imply search and access pattern privacy, respectively. We prove the differential privacy guarantees of $\OSSE$.
 \item We show that, even when $\OSSE$ is tuned to provide high utility (and a low differential privacy parameter $\epsilon$), our scheme still provides strong protection against four different query identification attacks \cite{islam2012access, liu2014search, cash2015leakage, pouliot2016shadow}. 
\end{itemize}

The paper is structured as follows. In Sect.~\ref{section:relwork} we review related work, before we introduce preliminaries in Sect.~\ref{section:preliminaries}.
We characterize the performance metrics we consider in Sect.~\ref{section:metrics} and present $\OSSE$ in Sect.~\ref{section:algorithm}.
We analyze the security, privacy, and complexity of our scheme in Sects.~\ref{section:security}, \ref{section:privacy}, \ref{section:complexity}, respectively, and evaluate it against empirical attacks in Sect.~\ref{section:evaluation}.
We conclude in Sect.~\ref{section:conclusions}.

\section{Related Work}
\label{section:relwork}

\subsection{Searchable Symmetric Encryption (SSE)}
SSE refers to a type of encryption that allows a data owner to outsource an encrypted database to an untrusted server while still preserving search functionalities. 
SSE was put forward by Song et al.~\cite{song2000practical}, who suggested several practical constructions whose search complexity is linear in the database size and secure under the Chosen Plaintext Attack (CPA). Goh et al.~\cite{goh2003secure} pointed out CPA was not adequate for SSE schemes.  
Curtmola et al.~\cite{curtmola2011searchable} provided formal notions of security and functionality for SSE, as well as the first constructions satisfying them with search complexity linear in the number of results (sub-linear in the size of the database). 
Subsequent works provided different security features, efficiency properties, and functionalities \cite{kamara2012dynamic, kamara2013parallel, cash2013highly, naveed2014dynamic, bost2016ovarphiovarsigma, kamara2017boolean}. However, all the above SSE schemes reveal which documents are accessed and returned in each query. 
This access pattern leakage opened the door to powerful query recovery attacks \cite{islam2012access, zhang2016all, cash2015leakage, pouliot2016shadow, blackstonerevisiting}.

\subsection{Query Recovery Attacks}
In 2012, Islam et al.~\cite{islam2012access} demonstrated that when knowing some statistics about a database and the content of a small fraction of queries, a semi-honest server could recover the contents of all queries with more than $90\%$ accuracy. 
This is the first powerful attack (known as IKK attack) utilizing access-pattern leakage. 
Subsequent works propose attacks that are effective when the adversary only knows a subset of the database~\cite{cash2015leakage, blackstonerevisiting} or has imperfect auxiliary information~\cite{pouliot2016shadow}. 
A different type of attack, called file-injection attack~\cite{zhang2016all, blackstonerevisiting}, showed that an active adversary could inject only a small number of carefully designed files in order to recover the content of queries by observing the access patterns of the injected files. 
Liu et al.~\cite{liu2014search} introduced an attack that leverages prior knowledge about the client's search habits and search-pattern leakage to identify the underlying keywords of the client's queries.
Even though these attacks aim at identifying the underlying keywords of the queries (query recovery), this in turn can allow the adversary to know the keywords of each document (database recovery).

\subsection{Oblivious RAM and Private Information Retrieval}
Oblivious RAM (ORAM), first introduced by Goldreich and Ostrovsky \cite{goldreich1996software}, was designed to hide memory access patterns by a CPU. 
Goldreich and Ostrovsky showed that a client could hide entirely the access patterns by continuous shuffling and re-encrypting data, with a $poly(\log n)$ communication overhead and $O(\log n)$ client-side storage. 
Since its proposal, there has been a fruitful line of research on further reducing this overhead \cite{stefanov2013path, apon2014verifiable, moataz2015constant, wang2015circuit, devadas2016onion}. 
Path ORAM \cite{stefanov2013path}, popular for its simplicity and efficiency, achieved $O(\log n)$ communication overhead with $O(\log n)$ client storage but required block size to be $\Omega(\log n)$.
Apon et al.~\cite{apon2014verifiable} showed that one can construct an ORAM scheme with constant communication overhead by leveraging fully homomorphic encryption (FHE). 
Even though subsequent works optimized this overhead~\cite{devadas2016onion,moataz2015constant}, FHE-based ORAM constructions still require a large communication overhead and rely on computationally expensive cryptographic primitives.
A $\Omega(\log n)$ lower bound of the overhead has been proven in the computational~\cite{larsen2018yes} and the statistical offline setting~\cite{goldreich1996software}. 
Naively combining SSE and ORAM might lead to a communication volume larger than directly downloading the entire database \cite{naveed2015fallacy}. 
Garg et al.~\cite{garg2016tworam} proposed TWORAM, an ORAM-based SSE scheme to hide access patterns with a communication overhead $O(\log n \cdot \log\log n)$, which requires at least four communication rounds and $O(\log^2 n)$ client-side storage.

Private Information Retrieval (PIR), proposed by Chor et al.~\cite{chor1995private}, allows a group of clients to privately retrieve documents from a public (unencrypted) database in the setting where there are many non-cooperating copies of the same database.
The main drawback of these schemes is that they require to touch every bit in the database per access, i.e., a computation overhead of $O(n)$.

We remark that ORAM and PIR provide private database access but, contrary to SSE schemes, they are \emph{not designed to provide search functionalities} by default (except for TWORAM~\cite{garg2016tworam}).
This means that a client must know beforehand the exact indices of the documents to be accessed.
Database search can also be implemented using only FHE~\cite{akavia2018secure, akavia2019setup}. 
The main drawback of these schemes is that, even though they have built-in search capabilities, they require multiple communication rounds to retrieve variable length results.

\subsection{Differentially Private SSE}
Chen et al.~\cite{chen2018differentially} proposed a differentially private obfuscation framework to mitigate access-pattern leakage in SSE.
The framework obfuscates the search index, i.e., a list of which documents contain keywords, by adding false positives and false negatives to it.
Consequently, the server only learns an obfuscated version of the access patterns, from which it is much harder to derive useful information.
However, since the obfuscation is fixed after outsourcing the index, querying for the same keyword multiple times results in the same obfuscated pattern.
This repetition actually reveals query frequencies of keywords, which can be used to recover the keywords being queried~\cite{liu2014search}.

\section{Preliminaries}
\label{section:preliminaries}

This section presents the formal definitions related to SSE and IPPE, that we need to characterize the leakage and security of our scheme, OSSE.
We give a high-level description of our scheme $\OSSE$ in Section~\ref{overview}.

\subsection{Searchable Symmetric Encryption (SSE)}

First, we introduce the notation related to SSE, which we summarize in Table~\ref{table:notation_database}. Let $\Delta = \{w_{(1)}, w_{(2)}, ..\dots, w_{(|\Delta| )}\}$ be the keyword universe, and let $|\Delta|$ denote its size.
Let $D$ be a document and let $id(D)$ be its document identifier, which we assume independent of its content. 
For notational simplicity, we treat documents as a list of the keywords they contain, e.g., $D=\{w_1,w_2,\dots,w_{|D|}\}$, and $w\in D$ means document $D$ contains keyword $w$. 
Let $\mathcal{D}$ be a dataset of $n$ documents, sorted by their ids.
$\mathcal{D}[i]$ is the $i$th document in the dataset; without loss of generality, we assume $id(\mathcal{D}[i])=i$.
Let $\mathcal{D}(w)$ be the list of documents that contain $w$ ordered by id. 

In the following, we define Searchable Symmetric Encryption (SSE) schemes and borrow the security definition~\cite{curtmola2011searchable}.

\begin{table}[t]
\centering
    \begin{tabular}{r p{6.5cm}}
        \textbf{Notation} & \textbf{Description}\\ \hline
        $\Delta$ & Keyword universe $\Delta=\{w_{(1)}, w_{(2)},\dots, w_{|\Delta|}\}$.\\
        $D$ & A document, $D=\{w_1, w_2,\dots,w_{|D|}\}$.\\
        $id(D)$ & Document identifier (or id) of $D$. \\
        $\mathcal{D}$ &  Dataset, list of all documents ordered by their ids. \\
        $\mathcal{D}[i]$ & $i$th document in $\mathcal{D}$, whose id is $i$.\\
        $n$ & Total number of documents, $n\doteq|\mathcal{D}|$.\\
        $\mathcal{D}(w)$ & List of all documents that contain $w$ ordered by id.\\
        $\IDX$ & Secure search index. \\
        \hline 
    \end{tabular}
    \caption{Database Notation}\label{table:notation_database}
\end{table}

\begin{definition}[Searchable Symmetric Encryption Scheme (SSE)]\label{def:SSE}
An SSE scheme is a collection of the following four polynomial-time algorithms: 
    \begin{itemize}
        \item $\textsf{Keygen}(1^\lambda)$ is a probabilistic key generation algorithm that is run by the client to setup the scheme. It takes a security parameter $\lambda$ and returns a secret key $\secretkey$ such that the length of $\secretkey$ is polynomially bounded in $\lambda$.
        \item $\textsf{BuildIndex}(\secretkey, \mathcal{D})$ is a (possibly probabilistic) algorithm that the client runs using the secret key $\secretkey$ and document collection $\mathcal{D}$ to generate an encrypted index $\IDX$ whose length is polynomially bounded in $\lambda$.
        \item $\textsf{Trapdoor}(\secretkey, w)$ is run by the client to generate a trapdoor $\tau_{w}$ for a keyword $w$ given the secret key $\secretkey$.
        \item $\textsf{Search}(\IDX,\tau_w)$ is run by the server in order to search for $\mathcal{D}(w)$. It takes an encrypted index $\IDX$ for a collection $\mathcal{D}$ and a trapdoor $\tau_w$ for keyword $w$ as inputs, and returns the set of identifiers of documents containing $w$.
    \end{itemize}
\end{definition}

In order to define the security of an SSE scheme, we borrow the concepts of \emph{history}, \emph{view} and \emph{trace} \cite{curtmola2011searchable}. First, the history contains the sensitive information that the client wants to keep private, namely the document plaintexts $\mathcal{D}$ and the sequence of keywords queried $\vec{w}$:

\begin{definition}[History] A history over $\mathcal{D}$ is a tuple
\begin{equation}
 H_t \doteq (\mathcal{D}, \vec{w})\,,\quad\text{where } \vec{w} \doteq (\vec{w}[1], \vec{w}[2], \dots, \vec{w}[t])\,.
\end{equation}
Here, $\vec{w}$ is the vector of underlying keywords of the $t$ queries.
\end{definition}
 
A \emph{partial history} of $H_t$, denoted $H_t^s$ for $s\leq t$, contains only the sequence of queries up to the $s$th query, i.e., $H_t^s\doteq(\mathcal{D}, \vec{w}')$ where  $\vec{w}' \doteq (\vec{w}[1], \dots, \vec{w}[s])$.

Next, the \emph{view} specifies what the server can see when running the SSE protocol, namely the document identifiers $id(\mathcal{D})$, the encryption of each document $\mathcal{E}(\mathcal{D}[i])$, a secure search index $\IDX$, and the query tokens (trapdoors) $\tau_i$:

\begin{definition}[View] The view of $H_t $ under key $\secretkey$ is the vector
\begin{equation}
 V_{\secretkey}(H_t) \doteq ( id(\mathcal{D}), \mathcal{E}(\mathcal{D}[1]),, ..., \mathcal{E}(\mathcal{D}[n]), \IDX, \tau_1, ..., \tau_t )\,,
\end{equation}
where $\tau_i$ is the query token for the $i$th query. The \emph{partial view} $V_{\secretkey}^s(H_t)$ only contains the tokens up to $\tau_s$, with $s\leq t$.
\end{definition}

Finally, the \emph{trace} of a history models the actual leakage of the SSE scheme. Before defining trace, we first define formally two types of leakage of SSE schemes: the access pattern and the search pattern.

\begin{definition}[Access Pattern]\label{def:ap}
The access pattern $\ap[\vec{w}]$ given a dataset $\mathcal{D}$ and a query vector $\vec{w}$ is a binary matrix of size $t \times n$ such that
\begin{equation}
\ap[\vec{w}][i, j] = \begin{cases}
1 &\text{ if } \vec{w}[i] \in \mathcal{D}[j]; \\
0 &\text{ otherwise}.
\end{cases}
\end{equation}
\end{definition}

\begin{definition}[Search Pattern]
The search pattern $\searchp[\vec{w}]$ given a dataset $\mathcal{D}$ and a query vector $\vec{w}$ is a symmetric binary matrix of size $t \times t$ such that
\begin{equation} 
\searchp[\vec{w}][i, j] = \begin{cases}
1 &\text{ if } \vec{w}[i] = \vec{w}[j];\\
0 &\text{ otherwise}.
\end{cases}
\end{equation}
\end{definition}

The access pattern reveals which documents contain which of the queried keywords, and the search pattern reveals which queries have the same underlying keyword.

\begin{definition}[\textsc{Trace}]\label{def:trace} 
The trace of $H_t$ is the vector
\begin{equation}
 T(H_t) \doteq (id(\mathcal{D}), |\mathcal{D}[1]|, \dots, |\mathcal{D}[n]|, \ap[\vec{w}], \searchp[\vec{w}])\,.
\end{equation}
\end{definition}

We are ready to define the adaptive semantic security for SSE. Informally, the definition states that an SSE scheme is secure if an adversary that observes the view can be simulated by an algorithm that only sees the trace; this implies that the trace contains all the information that is relevant to the adversary.

\begin{definition}[Adaptive Semantic Security for SSE~\cite{curtmola2011searchable}]
An SSE scheme is adaptively semantically secure if for all $t \in \mathbb{N}$ and for all (non-uniform) probabilistic polynomial-time adversaries $\mathcal{A}$, there exists a (non-uniform) probabilistic polynomial-time algorithm (the simulator) $\mathcal{S}$ such that for all traces $T_t$ of length $t$, all polynomially samplable distributions $\mathcal{H}_t$ over $\{ H_t : T(H_t) = T_t \}$ (i.e., the set of histories with trace $T_t$), all functions $f: \{0, 1\}^{|H_t|} \rightarrow \{ 0, 1 \}^{poly(|H_t|)}$, all $0 \leq s \leq t$, and sufficiently large $\lambda$:
\begin{equation*}
   \Fitpage{\big|Pr[ \mathcal{A}(V_{\secretkey}^s(H_t)) = f(H_t^s) ] - Pr[ \mathcal{S}(T(H_t^s)) = f(H_t^s) ] \big| < \delta(\lambda)}
\end{equation*}
where $H_t \leftarrow \mathcal{H}_t, \secretkey \leftarrow \textsf{Keygen}(1^\lambda)$, and probabilities are taken over $\mathcal{H}_t$ and the internal coins of $\mathcal{A}, \mathcal{S}$ and the underlying \textsf{Keygen, BuildIndex, Trapdoor, Search} algorithms. 
\end{definition}

Curtmola et al.~\cite{curtmola2011searchable} prove the equivalence of this definition and adaptive indistinguishability for SSE. 
In Section~\ref{section:security} we show that our SSE scheme provides adaptive semantic security.

\subsection{Inner Product Predicate Encryption (IPPE)}
\label{sec:FHIPPE}

We explain IPPE~\cite{katz2008predicate,shen2009predicate,bishop2015function}, a cryptographic tool that our SSE scheme uses.
In IPPE, plaintexts $x\in\Sigma$ and predicates $f\in\mathcal{F}$ are vectors $\Sigma=\mathcal{F}=\mathbb{Z}_N^s$.
We say that a plaintext $x$ satisfies a predicate $f$, denoted $f(x)=1$, if $\langle x, f \rangle=0$, where $\langle\cdot,\cdot\rangle$ denotes the inner product; otherwise, $f(x)=0$.

\begin{definition} \label{def:IPPE}
(\textsc{Symmetric-Key Inner Product Predicate Encryption}~\cite{shen2009predicate}) (IPPE) An IPPE scheme for the class of predicates $\mathcal{F}$ over the set of attributes $\Sigma $ consists of the following probabilistic polynomial-time algorithms.
	\begin{itemize}
      \item \textsf{Setup}$(1^\lambda )$ takes as input a security parameter $\lambda$ and outputs a secret key $\secretkey$.
      \item \textsf{Encrypt}$(\secretkey, x)$ takes as input a secret key $\secretkey$ and a plaintext $x \in \Sigma$ and outputs a ciphertext $\ciphertext_x$.
      \item \textsf{GenToken}$(\secretkey, f)$ takes as input a secret key $\secretkey$ and a predicate $f\in \mathcal{F}$ and outputs a search token $\searchtoken_f$.
      \item \textsf{Query}$(\searchtoken_f, \ciphertext_x)$ takes as input a token $\searchtoken_f$ for a predicate $f$ and a ciphertext $\ciphertext_x$ for plaintext $x$ and outputs $f(x)\in\{0,1\}$.
    \end{itemize}
\end{definition}

\textbf{Correctness}. 
IPPE correctnes requires that, for all $\lambda$, $x\in \Sigma $, and $f\in \mathcal{F}$; letting $\secretkey \leftarrow \textsf{Setup}(1^\lambda)$, $\searchtoken_f \leftarrow \textsf{GenToken}(\secretkey, \searchtoken_f)$, and $\ciphertext_x \leftarrow \textsf{Encrypt}(\secretkey, x)$,
	\begin{enumerate}
    \setlength\itemsep{0em}
    	\item If $\langle x, f \rangle = 0$, then $\textsf{Query}(\searchtoken_f, \ciphertext_x) = 1$.
        \item If $\langle x, f \rangle \not= 0$, then Pr[$\textsf{Query}(\searchtoken_f, \ciphertext_x) = 0$] $\geq 1 - \delta(\lambda )$ where $\delta $ is a negligible function.
    \end{enumerate}

An IPPE scheme can be used to securely evaluate polynomials \cite{katz2008predicate}, as follows. 
Let $\vec\alpha=(a_0, a_1, \dots, a_d)$ be the coefficients of a polynomial $P$ of degree $d$. 
The evaluation of this polynomial at point $x$ is simply $P(x) = \sum_{i = 0}^d a_i\cdot x^i = \langle \vec\alpha, \vec\beta \rangle$ where $\vec\beta = (x^0, x^1, \dots, x^d)$.
Therefore, the condition $P(x) = 0$ can be verified by checking whether \textsf{Query}$(\searchtoken_{\vec\beta}, \ciphertext_{\vec{\alpha}})=1$, where $\ciphertext_{\vec\alpha} \leftarrow \IPPE.\textsf{Encrypt}(\secretkey, \vec{\alpha})$ and $\searchtoken_{\beta}$ $\gets$ $\IPPE$.$\textsf{GenToken}$ ($\secretkey$, $\vec\beta)$.

\textbf{Security}. Shen et al.~\cite{shen2009predicate} define the security of an IPPE scheme by the following game $G$ between an adversary $\mathcal{A}$ and a challenger controlling the IPPE.
\begin{itemize}
    \item \textbf{Setup}: The challenger runs \IPPE.\textsf{Setup}$(1^\lambda)$, keeps $\secretkey$ to itself, and picks a random bit $b$.
    \item \textbf{Queries}: $\mathcal{A}$ adaptively issues two types of queries:
    \begin{itemize}
        \item [-] Ciphertext query. On the $j$th ciphertext query, $\mathcal{A}$ outputs two plaintexts $x_{j, 0}, x_{j, 1} \in \Sigma$. The challenger responds with $\IPPE.\textsf{Encrypt}(\secretkey, x_{j, b})$.
        \item [-] Token query. On the $i$th token query, $\mathcal{A}$ outputs descriptions of two predicates $f_{i, 0}, f_{i, 1} \in \mathcal{F}$. The challenger responds with $\IPPE.\textsf{GenToken}(\secretkey, f_{i, b})$.
    \end{itemize}
    $\mathcal{A}$'s queries are subject to the restriction that, for all ciphertext queries $(x_{j, 0}, x_{j, 1})$ and all predicate queries $(f_{i, 0}, f_{i, 1})$, $f_{i, 0}(x_{j, 0}) = f_{i, 1}(x_{j, 1})$.
    
    \item \textbf{Guess}: $\mathcal{A}$ outputs a guess $b'$ of $b$.
\end{itemize}
The advantage of $\mathcal{A}$ is defined as $Adv_{\mathcal{A}} = |Pr[b' = b] - \frac{1}{2}|$.

\begin{definition}
    (\textsc{Full Security} for IPPE~\cite{shen2009predicate}). A symmetric-key inner product predicate encryption scheme is fully secure if, for any probabilistic polynomial adversary $\mathcal{A}$, the advantage of $\mathcal{A}$ in winning the above game is negligible in $\lambda$.
\end{definition}

Roughly speaking, full security guarantees that given a set of tokens of predicates $f_1, \dots, f_k$ and a set of encryptions of plaintexts $x_1, \dots, x_t$, no adversary can gain any information about any predicate or any plaintext other than the value of each predicate evaluated on each of the plaintexts. 
The notion of predicate privacy is inherently impossible in the public-key setting, which is the reason why our construction works only in the private-key setting.

A stronger security notion in the context of $\IPPE$, called simulation-based security (SIM-security), requires that every efficient adversary $\mathcal{A}$ that interacts with the real $\IPPE$ can be simulated given only oracle access to the inner products between each pair of vectors that $\mathcal{A}$ submits to the real $\IPPE$. SIM-security implies full security, and an IPPE that provides SIM-secure is also called a Function-Hiding IPPE (FHIPPE).

\begin{definition}
    (\textsc{SIM-Security}  for IPPE). An IPPE is SIM-secure if it is fully secure and, for any efficient $\mathcal{A}$, there exists an efficient simulator $\mathcal{S}$ such that the following two games are computationally indistinguishable:
    \begin{equation*}
    \begin{array}{@{}l@{}c@{}l@{}}
    \textsf{Real}_{\mathcal{A}}(1^{\lambda}): &\qquad& 
    \textsf{Ideal}_{\mathcal{A, S}}(1^{\lambda}): \\
    \quad \secretkey \gets \textsf{IPPE.Setup}(1^{\lambda}) &&
    \quad \secretkey' \gets \mathcal{S}.\textsf{Setup}(1^{\lambda})\\
    \quad b' \gets  G_{\mathcal{A}, \textsf{IPPE}}(1^{\lambda}) &&
    \quad b'' \gets  G_{\mathcal{A}, \mathcal{S}}(1^{\lambda})\\
    \quad \text{output }b' &&
    \quad \text{output }b''\\
    \end{array}
    \end{equation*}
    Here, game $G_{\mathcal{A}, \textsf{IPPE}}$ represents the game defined previously, played between $\mathcal{A}$ and $\textsf{IPPE}$, and $G_{\mathcal{A, S}}$ is the game between $\mathcal{A}$ and $\mathcal{S}$.
\end{definition}

\section{Performance Metrics in SSE}
\label{section:metrics}

As explained earlier, existing SSE schemes offer different privacy, performance, and utility trade-offs. 
Schemes that leak the access and search patterns typically provide high performance and utility, while high protection solutions such as ORAM or PIR incur some computation cost, communication cost, or client storage requirements. 
Our goal is to find a middle-ground solution that obfuscates the search/access patterns while being lighter in terms of cost.
We make a distinction between \emph{utility metrics}, that directly affect the outcome of the protocol, and \emph{performance metrics}, that measure the cost of running the protocol.

From the client's perspective, the outcome of the protocol are the documents received from the server as a query response. Privacy-preserving schemes can sometimes return documents that do not contain the requested keyword (false positives), or miss out some documents that do contain the keyword (false negatives). This utility loss can be characterized by two metrics:
\begin{itemize}
	\item \textbf{True Positive Rate} (TPR): is the probability that the server returns a document that contains the queried keyword as a response to a query.
	\item \textbf{False Positive Rate} (FPR): is the probability that the server returns a document that \emph{does not} contain the requested keyword as a response to a query.
\end{itemize}

In most cases, the client aims for large TPR values (e.g., $>0.9999$) and low FPR values (e.g., $<0.01$).

On the other hand, we consider the following performance metrics that measure the cost of running the SSE algorithm:

\begin{itemize}
 \item \textbf{Communication Cost:} is the number of bytes exchanged between the client and the server and the number of communication rounds needed to perform each query.
 \item \textbf{Computational Cost:} is the number of operations that the client and server perform to run the SSE scheme.
 \item \textbf{Client Storage Cost:} is the amount of local memory that the client needs in order to run the SSE scheme.
\end{itemize}

The TPR and FPR can be palliated at the expense of an increase in the cost of the protocol:
The TPR can be increased by replicating documents on the server side, dividing each document into $K$ shards and requiring $k<K$ shards for document recovery~\cite{chen2018differentially}.
This incurs an extra communication, computation, and server storage cost.
Likewise, false positives can be trivially filtered by the client by checking whether or not a returned document contains the queried keyword. Thus, the effective cost of false positives is bandwidth consumption.

Our SSE scheme, $\OSSE$, requires a \emph{single communication round}, a (small) constant client storage, and can run queries in parallel, which can potentially allow for faster running times.

Finally, we consider both theoretical and empirical privacy metrics.
From the theoretical perspective, in Sect.~\ref{section:privacy} we study the performance of our scheme and previous proposals in terms of \emph{differential privacy}~\cite{dwork2008differential}, that has already been applied to our setting \cite{chen2018differentially}.
Then, in Sect.~\ref{section:evaluation} we use the \emph{empirical accuracy} of known attacks~\cite{islam2012access,liu2014search,cash2015leakage,pouliot2016shadow} to assess the privacy properties of SSE schemes in practice.

\section{Algorithm Description}
\label{section:algorithm}

We describe our proposal, that we call Obfuscated SSE ($\OSSE$).
Table~\ref{table:notation} summarizes the notation of this section.

\begin{table}[t]
\centering
    \begin{tabular}{r p{6.5cm}}
        \textbf{Notation} & \textbf{Description}\\ \hline
        $\ap[w]$ & Real access pattern for keyword $w$ (Def.~\ref{def:ap}).\\
        $\aprand[w]$ & Obfuscated access pattern for keyword $w$.\\
        $l$ & Document label, computed as a hash $l\doteq h(id(D))$.\\
        $\counteri{}$ & Counter used in our $\OSSE$ scheme.\\
        $\countermax$ & Maximum value of $\counteri{}$. \\
        $w_{\text{-1}}$ & A dummy keyword such that $w_{\text{-1}}\notin\Delta$.\\
        $I[i]$ & Encrypted polynomial that allows computing query matches for document $i$. \\
        $I$ & Search index, $I=(I[1],I[2],\dots,I[n])$.\\
        $\token[f]$ & Query token for a predicate $f$.\\
        $\Gamma_w$ & Set of $(\token[f],l)$ generated when querying for $w$.\\
        \hline 
    \end{tabular}
    \caption{Notation for the Algorithm Description}\label{table:notation}
\end{table}

\subsection{Construction Overview}\label{overview}

In a traditional SSE scheme, when the client queries for a keyword $w$, the server can compute the access pattern $\ap[w]$, i.e., a list of the identifiers of the documents in the database that contain $w$. 
The premise for our scheme is simple: we want to introduce random false positives (i.e., return documents that do not contain $w$) and false negatives (i.e., some of the documents that contain $w$ are not returned). 
By doing so, we only allow the server to observe an \emph{obfuscated access pattern} $\aprand[w]$. 
Each query results in a freshly random access pattern, which in turn also hides the search pattern, since given a set of obfuscated access patterns it is hard to tell whether or not these patterns where generated by queries for the same or different keywords. 
Despite the simplicity of this premise, achieving this functionality in a single communication round and without client storage is particularly challenging.

$\OSSE$ works as follows.
The client first generates its private key.
Then, for each document in the dataset, the client calls a function that we call $\genVec$ to generate a polynomial that encodes the keywords of that document.
These polynomials are encrypted using the private key, and this collection of ciphertexts forms the \emph{search index}.
The client sends the encrypted dataset and the search index to the server.
When querying for a keyword $w$, the client calls a function $\genToken(w)$ to generate a set of predicates, which are then converted into \emph{tokens}.
The client sends these tokens to the server.
The server evaluates the tokens on the search index (i.e., on the encrypted polynomials of each document); this evaluation returns a match for those documents that meet the query and must be returned to the client.
The key to adding random false positives and false negatives to this matching process lies in the construction of the functions $\genVec$ and $\genToken$, that we explain below.

In order to speed up computation, the client assigns a \emph{label} $l$ to each document by hashing its document identifier, i.e., the label of $D$ is $h(id(D))$. The hash function is publicly know, so the server can compute the label of each encrypted document. Then, the client attaches a label $l$ to each query token, so that the server only needs to evaluate each token on the encrypted polynomials of documents that share the label $l$.

\subsection{Polynomial Generation ($\genVec$)} 
\label{section:polynomial}

The function $\genVec(D,id(D))$ generates the polynomial coefficients for document $D$, that are later encrypted and sent to the server as part of the search index.
Before explaining this function, we clarify the following point: when the client sends a query token to the server, we want to avoid multiple document matches for a single token, since this reveals that two documents contain the same keyword and introduces correlations in the query response, which are hard to take into account in a privacy analysis. Therefore, the client designs the polynomials and query tokens so that a token can either match a single document or none at all. As a consequence, the client must generate multiple tokens per query in order to retrieve all the desired documents.

Now we explain how the client crafts the polynomials. Consider a document $D=\{w_1, w_2,\dots,w_{|D|}\}$ that contains $|D|$ keywords, has identifier $id(D)$, and label $l\doteq h(id(D))$. Let $\sizemax$ be the maximum number of keywords that any document can have. Then, the roots of the polynomial (i.e., the values for which it should return a match) are the following ($||$ denotes concatenation):
\begin{equation} \label{eq:polyroots}
\begin{aligned}
 (w_i||l||\counteri{w_i,l})\,,&\quad &&\text{for }i=1,2,\dots,|D|\,,\\
 (w_{\text{-1}}||0||0)\,,&\quad &&\text{for }i=|D|+1,\dots,\sizemax\,,\\
 (id(D)||0||\text{-1})\,.&&&
\end{aligned}
\end{equation}

The first set of roots ($w_i||l||\counteri{w_i,l}$) include each keyword $w_i$ in the document, the document label $l$, and a counter $\counteri{w_i,l}$ that is chosen so that none of the polynomials (one polynomial per document) have a root in common (so as to avoid more than one match per token). This counter starts at zero and keeps increasing as the client builds the search index. For example, two documents $\mathcal{D}[i]$ and $\mathcal{D}[k]$ might both contain a particular keyword $w$ and share a label $l=h(i)=h(k)$. However, their corresponding polynomials will not share a root, since the value in their $\counteri{w,l}$ field will be different. The maximum value of this counter is fixed before building the search index, and it is chosen such that, with overwhelming probability, the client does not need a higher counter value (more on this on Sect.~\ref{section:countermax}). These roots ($w_i||l||\counteri{w_i,l}$) allow the client to find true positive matches by generating tokens using this structure with their desired query keyword $w$, and looping through all labels $l$ and counter values.

The second set of roots ($w_{\text{-1}}||0||0$) include a dummy keyword $w_{\text{-1}}$ that is not in the keyword universe $w_{\text{-1}}\notin\Delta$. These roots act as padding, so that every polynomial has the same length (the same number of roots and coefficients) and thus hide the number of keywords that each document has.
Finally, the last root ($id(D)||0||\text{-1}$) allows the client to trigger false positives using only the document identifier $id(D)$ by generating a token with this structure.

Building the search index following \eqref{eq:polyroots} ensures that the only root in common between polynomials generated for different document is ($w_{\text{-1}}||0||0$), which is only used for padding and never queried. 
This means that a token can only trigger at most one match. 
When querying for keyword $w$, the client can generate false negatives by skipping some label/counter values in the generation of tokens ($w||l||\counteri{}$), force false positives by generating tokens from ($id(D)||0||\text{-1}$), and force non-matches by generating tokens from values that do not match any polynomial, e.g., ($w_{\text{-1}}||\text{-1}||0$). 

The function $\genVec$ takes as input a document $\mathcal{D}[i]$ and its label $i$, generates a polynomial based on its keywords as detailed in \eqref{eq:polyroots}, and returns the vector of polynomial coefficients $v_i$ (which is straightforward to compute from the roots). Since we have $\sizemax+1$ roots, the vector of polynomial coefficients will be of size $\sizemax+2$.

\subsection{Predicate Generation ($\genToken$)}
\label{section:tokgen}

The client calls $\genToken$ when performing a query. 
This function receives a keyword $w$ and outputs a set of (predicate, label) tuples.
The client converts them to (token, label) pairs and sends to the server to run the query.

The procedure to generate the predicate and label pairs is shown in Algorithm~\ref{alg:gentoken}.
The algorithm starts by initializing a multiset that will store the values on which to evaluate the polynomials, and their associated labels.
Then, the first part of the algorithm (lines \ref{alg1:part1}-\ref{alg1:part1end}) attempts to generate a query for each label and counter value, but only adds each with probability $p$.
Typically, $p$ will be close to 1. The goal of this part is to try to obtain a match in those documents that actually contain $w$, while having a small probability ($1-p$) of false negative.
The remaining code aims at hiding the true positives and non-matches produced by the first part of the algorithm.
The second part (lines \ref{alg1:part2}-\ref{alg1:part2end}) hides the true positives by generating false positives for each document in the dataset following a geometric distribution with parameter $1-q$ (we want $q$ close to zero to avoid a huge amount of false positives).
We explain this distribution choice in Sect.~\ref{section:geo}.
The third part (lines \ref{alg1:part3}-\ref{alg1:part3end}) hides the non-matches of the first part by generating non-matches for each label $l\in[|h|]$ following a geometric distribution with parameter $1-q$.
The remaining lines (\ref{alg1:part4}-\ref{alg1:part4end}) convert the predicates to a vector format such that its inner product with the document's polynomial coefficients returns whether or not the document should be returned. 

\begin{algorithm}
    \caption{Predicate generation when querying for $w$}\label{alg:gentoken}
    \begin{algorithmic}[1]
    \Procedure{\genToken}{$w$}
    \State $\texttt{TupleSet} \gets \emptyset$ 
    \For{$l=1$ to $|h|$} \label{alg1:part1}
        \For{$\counteri{}=0$ to $\countermax$}
            \State with prob.~$p$, $\texttt{TupleSet}.add\big(  [ w\| l\| \counteri{}, l ]\big)$
        \EndFor
    \EndFor \label{alg1:part1end}
    \For{$id=1$ to $n$} \label{alg1:part2}       
				\State {$n_{FP}\gets \Geometric{}{1-q}$}
				\For{$k=1$ to $n_{FP}$}
					\State $\texttt{TupleSet}.add\big([id\| 0 \| \text{-1}, h(id)]\big)$
				\EndFor
    \EndFor \label{alg1:part2end}
    \For{$l=1$ to $|h|$} \label{alg1:part3}
				\State {$n_{NM}\gets \Geometric{}{1-q}$}
				\For{$k=1$ to $n_{NM}$}
					\State $\texttt{TupleSet}.add\big([w_{\text{-1}} \| \text{-1} \| 0, l ]\big)$
				\EndFor \label{alg1:part3end}
    \EndFor 
    
    \State $\texttt{PredLabelPairs} \gets \emptyset$  \label{alg1:part4}
    \For{$x, l \in \texttt{TupleSet}$} 
        \State {$\texttt{predicate}=\big(x^0, x^1, x^2, \dots, x^{\sizemax+1}\big)$}
        \State {$\texttt{PredLabelPairs}.add([ \texttt{predicate}, l ] )$}
    \EndFor
    \State return $\texttt{PredLabelPairs}$
    \EndProcedure \label{alg1:part4end}
    \end{algorithmic}
\end{algorithm}

Note that we can get a true positive if the inital coin flip succeeds (probability $p$) or if it fails but the geometric distribution generates at least one token, i.e., $\TPR=p+(1-p)q$. The probability of false positive is $\FPR=q$.

\subsection{Formal Interface}

We formalize $\OSSE$ interface following Definition~\ref{def:SSE} and using the IPPE functions in Definition~\ref{def:IPPE} as follows:

\begin{itemize}
    \item $\textsf{Keygen}(1 ^ \lambda)$: takes as input a security parameter $\lambda$ and returns $\FHIPPE.\textsf{Setup}(1^\lambda)$, which outputs secret key $\secretkey$.
    
    \item $\textsf{BuildIndex}(\secretkey, \mathcal{D})$: takes the secret key $\secretkey$ and the database $\mathcal{D}$, computes the polynomial coefficients $v_i=\genVec(D,id(D))$ for every document $D\in\mathcal{D}$, and then encrypts them by calling $\FHIPPE.\textsf{Encrypt}(\secretkey, v_i)$. The encrypted polynomials of all documents form the \emph{search index} $I$ of the database.
    
    \item $\textsf{Trapdoor}(\secretkey, w)$: takes the secret key $\secretkey$ and a keyword $w$, then calls $\genToken(w)$ to get a set of (predicate, label) tuples. Every predicate $f$ is transformed into a query \emph{token} $\token[f]$ using $\FHIPPE$.$\textsf{GenToken}(sk, f)$, and the output of $\textsf{Trapdoor(\secretkey, w)}$ is a set of (token, label) pairs $\Gamma_w$, that the client sends to the server.
    
    \item $\textsf{Search}(I, \Gamma_w)$:  takes as input the search index $I$ and a set of query tokens and labels $\Gamma_w$. Then, for every token and label $(\token[f], l) \in \Gamma_w$, it calls $\FHIPPE.\textsf{Query}(I[i], \token[f])$, for all document identifiers $i$ whose label is $l$; if the result of the inner product is $0$ (there has been a match), then it returns the corresponding document identifier $i$. The output of $\textsf{Search}(I, \Gamma_w)$ is the set of matched document identifiers $\vec{id}$.
The server then sends to the client the encrypted documents that correspond with these identifiers (ignoring possible duplicate ids in $\vec{id}$).
\end{itemize}

\subsection{Leakage Characterization}
\label{section:leakchar}

As explained above, $\OSSE$ hides the access and search patterns by adding random false positives, false negatives, and hiding the number of tokens that yield a non-match. In this section, we characterize the leaked access pattern, that we call \emph{obfuscated access pattern}. Then, we define the obfuscated trace, an update of Def.~\ref{def:trace} that characterizes the leakage of $\OSSE$. We use these concepts in the security and privacy analysis of $\OSSE$ in Sects.~\ref{section:security} and \ref{section:privacy}.

Since $\OSSE$ generates tokens independently per query, we study the adversary's observation when the user queries for a particular keyword $w$. 
First, the user calls $\genToken$ to generate a sequence of tokens and labels $\Gamma_w$, and sends them to the server. 
The server evaluates each of the tokens $(\token,l)\in\Gamma_w$ using the search index and either obtains a single match (in a document that has a label $l$) or a non-match. 
Therefore, for each token and label pair $(\token,l)$ there are $n+|h|$ possible adversary observations: either document $\mathcal{D}[i]$ (with label $h(id(\mathcal{D}[i]))=l$) satisfied the token ($n$ possibilities), or there was a non-match ($|h|$ possibilities, one for each label). 
The adversary observes one of these outcomes for each single token received. Therefore, the obfuscated access pattern $\aprand[w]$ for this query can be seen as a multi-nary $n+|h|$ vector. 
We characterize this vector now, following Alg.~\ref{alg:gentoken}.

First, let's study the value of $\aprand[w][i]$ when $i\in[n]$, i.e., the number of times the adversary observes a match for the $i$th document. If this document contains $w$, then lines \ref{alg1:part1}-\ref{alg1:part1end} will generate a true match with probability $p$. Otherwise, these lines will not generate a match for document $i$. Then, lines \ref{alg1:part2}-\ref{alg1:part2end} add false positive matches for each document following a geometric distribution with parameter $1-q$. The rest of the algorithm just generates extra non-matches. Therefore,
\begin{equation} \label{eq:dist1}
 \aprand[w][i]\sim\begin{cases}
      \Bernoulli{p}+\Geometric{}{1-q} &\text{if } w\in \mathcal{D}[i]\,,\\
      \Geometric{}{1-q} &\text{if } w\notin \mathcal{D}[i]\,,
     \end{cases}\quad \text{for } i\in[n]\,.
\end{equation}

Now, let's study the number of non-matches the adversary sees with label $l\in[|h|]$, i.e., $\aprand[w][n+l]$.
In lines \ref{alg1:part1}-\ref{alg1:part1end}, every token generated with label $l$ will return a non-match except for those documents with label $l$ that contain $w$.
Since each token is generated with probability $p$, the non-matches with label $l$ follow a binomial distribution with parameters $g_l\doteq\countermax-|\mathcal{D}(w)_l|$ and $p$, where $\mathcal{D}(w)_l$ is the set of ids of documents with label $l$ that contain keyword $w$.
Additionally, lines \ref{alg1:part3}-\ref{alg1:part3end} generate a number of non-matches with label $l$ following a geometric distribution with parameter $1-q$.
Therefore,
\begin{equation} \label{eq:dist2}
 \aprand[w][n+l]\sim \,\,\Binomial{g_l}{p}+\Geometric{}{1-q}\quad\text{for } l\in[|h|]\,.
\end{equation}

We are ready to define the obfuscated access pattern:
\begin{definition}[Obfuscated Access Pattern]
The obfuscated access pattern $\aprand[\Vec{w}]$ over a history $H_t$ is a multi-nary matrix of size $t\times n+|h|$ whose $i$th row $\aprand[{\Vec{w}[i]}]$ is characterized by \eqref{eq:dist1} and \eqref{eq:dist2}.
\end{definition}

The obfuscated trace summarizes the leakage of $\OSSE$:
\begin{definition}[Obfuscated Trace]\label{def:obfuscatedtrace}
The trace of a history $H_t$ when searching for $t$ keywords $\Vec{w}$ in $\OSSE$, called obfuscated trace and denoted by $\tracerand$, is
\begin{equation}
 \tracerand(H_t)\doteq(id(\mathcal{D}), |\mathcal{D}[1]|,\dots,|\mathcal{D}[n]|,h, \freqmax, \sizemax,\aprand[\Vec{w}])\,,
\end{equation}
where $\aprand[\Vec{w}]$ is the \emph{obfuscated access pattern} for the query vector $\vec{w}$, $h$ is the hash function used to generate document labels, and $\freqmax$ and $\sizemax$ are the maximum number of documents that contain a particular keyword and the maximum number of keywords per document, respectively.
\end{definition}

\subsection{Choice of Geometric Distribution for False Positives}
\label{section:geo}

As we explain above, $\OSSE$ adds false positives following a geometric distribution.
The reason for this choice is the following: when the adversary observes $k>1$ matches for a certain document $\mathcal{D}[i]$, they know for sure that at least $k-1$ are false positives (because there can only be a single true positive).
However, since the geometric distribution is memoryless, the number of matches $k$ does not reveal anything more than just observing a single match about whether a keyword $w$ is in $\mathcal{D}[i]$ or not.
Mathematically,
\begin{equation}
  \frac{\Pr(\aprand[w][i]=1|w\in \mathcal{D}[i])}{\Pr(\aprand[w][i]=1|w\notin \mathcal{D}[i])}=\frac{\Pr(\aprand[w][i]=k|w\in \mathcal{D}[i])}{\Pr(\aprand[w][i]=k|w\notin \mathcal{D}[i])}\,,
\end{equation}
for all $k>1$.
Additionally, as we prove in Sect.~\ref{section:privacy}, the geometric distribution allows $\OSSE$ to provide differential privacy guarantees.

\subsection{Label Generation Function $h$ and $\countermax$.}
\label{section:countermax}

Our protocol assigns a label to each document by hashing its document identifier using a function $h(\cdot)$. Then, when running $\genVec$ while building the index ($\textsf{BuildIndex}$), $\OSSE$ uses a counter to ensure that no token can match more than one document. The size of the label (denoted $|h|$) determines how large the counter value must be: a small $|h|$ will cause a lot of documents with a particular keyword $w$ to have the same label $l$, and thus will make $\counteri{w,l}$ large. However, the maximum value of $\counteri{w,l}$, denoted $\countermax$, must be fixed before outsourcing the database, otherwise the index building process would leak information about keyword frequency per label. If the client commits to a maximum value $\countermax$ that is too small, it will not be possible to generate the coefficients such that their roots are all unique unless some keywords are removed from some documents. We call this a $\textsf{BuildIndex}$ \emph{failure}. In order to avoid a failure and ensure that $\textsf{BuildIndex}$ \emph{succeeds}, we set $|h|=\freqmax$ and $\countermax=3\ln n/\ln\ln \freqmax$. The following theorem guarantees an overwhelming success probability with these parameters:

\begin{theorem} Let $h$ be a hash function that outputs values uniformly at random in the range $|h|=\freqmax$, i.e., $h:[n]\to[\freqmax]$. Let $\countermax=3\ln n/\ln\ln\freqmax$. Then, the index building process \emph{succeeds} with probability $\geq 1-1/n$. 
\end{theorem}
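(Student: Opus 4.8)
The plan is to recast ``$\textsf{BuildIndex}$ fails'' as an overflow event of a balls-into-bins process and then control it with a Poisson-tail estimate plus a union bound. First I would observe that $\textsf{BuildIndex}$ fails exactly when there exist a keyword $w\in\Delta$ and a label value $l\in[\freqmax]$ such that the number of documents that contain $w$ and hash to label $l$ exceeds $\countermax+1$ (the counter $\counteri{w,l}$ runs over the $\countermax+1$ values $0,\dots,\countermax$, so it overflows only once a $(\countermax+2)$-th such document is processed). Fix $w$ and put $f_w\doteq|\mathcal{D}(w)|$; by the definition of $\freqmax$ we have $f_w\le\freqmax$. Since $h$ maps each identifier to an i.i.d.\ uniform value in $[\freqmax]$, the labels of the $f_w$ documents containing $w$ are i.i.d.\ uniform, so the count landing on a fixed label is distributed as $\Binomial{f_w}{1/\freqmax}$; equivalently, this is the load of one bin when $f_w\le\freqmax$ balls are thrown into $\freqmax$ bins.

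Next I would bound the overflow probability of a single bin and union over all bins and keywords. With $K\doteq\countermax+2$ and $\mu\doteq f_w/\freqmax\le 1$, a standard Chernoff/Poisson-tail bound gives
\begin{equation*}
 \Pr\big[\text{a fixed label receives }\ge K\text{ of the }f_w\text{ documents containing }w\big]\ \le\ \Big(\frac{e\mu}{K}\Big)^{K}\ \le\ \Big(\frac{e}{K}\Big)^{K}\,.
\end{equation*}
There are at most $|\Delta|\cdot\freqmax$ keyword/label pairs, and $\freqmax\le n$ while $|\Delta|$ is polynomial in $n$ (e.g.\ $|\Delta|\le\sum_{i}|\mathcal{D}[i]|\le n\sizemax$), so a union bound (replacing $K$ by the smaller $\countermax$, which only loosens the estimate) yields
\begin{equation*}
 \Pr[\textsf{BuildIndex}\text{ fails}]\ \le\ |\Delta|\,\freqmax\cdot\Big(\frac{e}{\countermax}\Big)^{\countermax}\,.
\end{equation*}

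Finally I would substitute $\countermax=3\ln n/\ln\ln\freqmax$ and show the right-hand side is at most $1/n$. Since $\ln\countermax=\ln 3+\ln\ln n-\ln\ln\ln\freqmax$,
\begin{equation*}
 -\ln\Big(\frac{e}{\countermax}\Big)^{\countermax}=\countermax(\ln\countermax-1)=\frac{3\ln n}{\ln\ln\freqmax}\big(\ln\ln n+\ln 3-1-\ln\ln\ln\freqmax\big)\,.
\end{equation*}
The decisive elementary step is the inequality $\ln\ln\freqmax\le\ln\ln n$ (which holds because $\freqmax\le n$): it implies $\ln\ln n+\ln 3-1-\ln\ln\ln\freqmax\ge(1-o(1))\ln\ln\freqmax$, hence $\countermax(\ln\countermax-1)\ge(3-o(1))\ln n$ and $(e/\countermax)^{\countermax}\le n^{-3+o(1)}$; multiplying by the polynomial factor $|\Delta|\,\freqmax$ still leaves a quantity $\le 1/n$ for $n$ large. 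I expect this last step to be the main obstacle: the second-order terms (the $\ln\ln\ln\freqmax$ correction and the gap between $\ln\ln\freqmax$ and $\ln\countermax$) must be handled carefully, since it is exactly the constant $3$ in the numerator of $\countermax$ --- rather than $1$ or $2$ --- that lets the tail estimate beat the $|\Delta|\,\freqmax$ loss from the union bound; for the constant as stated one also needs $|\Delta|\,\freqmax=n^{2-\Omega(1)}$, which holds as long as keyword frequencies and the keyword-universe size are not pathologically large.
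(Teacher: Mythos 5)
Your proposal is correct and follows essentially the same route as the paper's proof: model the per-keyword label counts as balls into $\freqmax$ bins, bound the overflow probability of a single bin by the Chernoff-type tail $\left(\frac{e}{\countermax}\right)^{\countermax}$, union-bound over keyword/label pairs, and substitute $\countermax=3\ln n/\ln\ln\freqmax$. The only differences are cosmetic --- you union over labels explicitly and close with an asymptotic estimate, whereas the paper chases the exact constant via the condition $c\geq(\ln|\Delta|+\ln n)/\ln\freqmax+1$ with $c\geq e$ --- and both arguments rest on the same implicit parameter assumption $|\Delta|\cdot\freqmax\lesssim n^{2}$ (i.e., $|\Delta|,\freqmax\lesssim n$), which you flag explicitly and the paper leaves tacit.
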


\begin{proof}
 Refer to Appendix~\ref{app:countermax1}.
\end{proof}

In practice, $\countermax$ can be even smaller. In our experiments in Sect.~\ref{section:evaluation}, with a real dataset with $n\approx 30\,000$ documents and $\freqmax\approx2\,000$, a value $\countermax=7$ is enough to successfully build the index.

\section{Security}
\label{section:security}

In this section, we prove that $\OSSE$ is adaptively semantically secure when the underlying $\FHIPPE$ is SIM-secure. For this, we show that a simulator that receives the obfuscated trace (Def.~\ref{def:obfuscatedtrace}) can simulate the view of an adversary who is allowed to adaptively issue queries.

\begin{theorem} $\OSSE$ is adaptively semantically secure.
\end{theorem}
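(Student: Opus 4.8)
The plan is to build a probabilistic polynomial-time simulator $\mathcal{S}$ that, given only the obfuscated trace $\tracerand(H_t^s)$ of Def.~\ref{def:obfuscatedtrace}, produces a view computationally indistinguishable from the real view $V_{\secretkey}^s(H_t)$. The argument reduces the security of $\OSSE$ to the SIM-security of the underlying $\FHIPPE$: first I would describe how $\mathcal{S}$ simulates each component of the view, and then invoke the $\FHIPPE$ simulator as a sub-routine whose outputs are indistinguishable from the real ciphertexts and tokens. Concretely, $\mathcal{S}$ simulates the document identifiers $id(\mathcal{D})$ and the encrypted documents $\mathcal{E}(\mathcal{D}[i])$ directly from the trace (the trace contains $id(\mathcal{D})$ and the document sizes $|\mathcal{D}[i]|$, and the encryption scheme for documents is assumed semantically secure, so encryptions of zero-strings of the right length suffice). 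For the search index $\IDX$ and the query tokens, $\mathcal{S}$ does not know the real polynomial coefficients or predicate vectors, so it instead runs the $\FHIPPE$ simulator $\mathcal{S}_{\FHIPPE}$, feeding it a consistent set of inner-product values.

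The key step is choosing those inner-product values so that they are consistent with what the adversary actually observes, namely the obfuscated access pattern $\aprand[\vec{w}]$. Recall that $\OSSE$ is engineered (via the root construction in \eqref{eq:polyroots}) so that each token matches at most one document: a token with label $l$ either yields $\langle\cdot,\cdot\rangle=0$ on exactly one document with that label, or yields a nonzero inner product on every document with label $l$. Thus the only information the inner products carry is precisely the ``which document matched, if any'' outcome encoded in $\aprand[\vec{w}]$. So the simulator proceeds as follows: upon the $i$th adaptive query, it reads the $i$th row $\aprand[{\vec{w}[i]}]$ of the obfuscated access pattern from the trace, which tells it, for each token of that query, either the index of the matched document (with its label) or the label of the non-match; it then produces a batch of simulated tokens (one per token in the real $\Gamma_{\vec{w}[i]}$ --- the \emph{number} of tokens of each type is itself part of $\aprand[{\vec{w}[i]}]$ via \eqref{eq:dist1}--\eqref{eq:dist2}) by invoking $\mathcal{S}_{\FHIPPE}$ with the corresponding pattern of zero/nonzero inner products against the previously-simulated index ciphertexts. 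Since the $\FHIPPE$ is function-hiding, the simulated tokens and the simulated index together are indistinguishable from the real ones as long as the inner-product table is internally consistent, which the above construction guarantees.

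The main obstacle --- the point that requires the most care rather than routine bookkeeping --- is showing that the simulator's token outputs are \emph{distributed identically} to the real ones, not merely consistent with a single observed pattern. In the real scheme, the client generates tokens of four distinct syntactic types (true-positive attempts $(w\|l\|\counteri{})$, false positives $(id\|0\|\text{-}1)$, non-matches $(w_{\text{-1}}\|\text{-}1\|0)$, and the coin-flip/geometric padding), and the \emph{counts} of each are random. The trace's obfuscated access pattern $\aprand[\vec{w}]$ records exactly these counts (the number of matches per document and the number of non-matches per label), so the simulator can sample a token multiset with the right cardinalities; what it cannot reproduce are the underlying vectors, but that is exactly what $\FHIPPE$'s function-hiding property lets it fake. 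One must verify that a real adversary, seeing the genuine $\FHIPPE$ tokens, learns nothing beyond these inner-product outcomes --- this is where the SIM-security definition (the indistinguishability of $\textsf{Real}_{\mathcal{A}}$ and $\textsf{Ideal}_{\mathcal{A},\mathcal{S}}$) is applied, composing it with the semantic security of the document encryption via a standard hybrid argument over $s$ queries.

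Finally, I would assemble the hybrid: $H_0$ is the real view; in successive hybrids replace the document encryptions by encryptions of zeros (indistinguishable by CPA-security), then replace the $\FHIPPE$ index and tokens by the $\FHIPPE$-simulator's outputs (indistinguishable by SIM-security of $\FHIPPE$, using that the inner-product values supplied match those induced by the real history, which holds because $\aprand[\vec{w}]$ is computed from the same $H_t$); the last hybrid is exactly $\mathcal{S}(\tracerand(H_t^s))$. The total distinguishing advantage is bounded by the sum of the $\FHIPPE$ and encryption advantages, which is negligible in $\lambda$, establishing the $\delta(\lambda)$ bound of the adaptive semantic security definition. The adaptivity causes no difficulty because $\mathcal{S}$ answers each query as it arrives using the corresponding row of $\aprand[\vec{w}]$, and the $\FHIPPE$ simulator is itself adaptive.
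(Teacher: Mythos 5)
Your proposal is correct and follows essentially the same route as the paper's proof: simulate the document ciphertexts via semantic security, simulate the index and each query's token multiset by invoking the $\FHIPPE$ simulator with predicate-evaluation outcomes (match/non-match counts per document and per label) read off the obfuscated access pattern rows of $\tracerand$, and reduce any distinguisher to a break of SIM-security. The paper merely instantiates your ``consistent inner-product table'' concretely --- the simulated index entries are $\mathcal{S}_{\FHIPPE}.\textsf{Encrypt}(\genVec(\bot,i))$ and the simulated tokens are built only from $[i\|0\|\text{-1},h(i)]$ and $[w_{\text{-1}}\|\text{-1}\|0,l]$ --- which is exactly the construction your argument describes abstractly.
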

\begin{proof} Let $\mathcal{S}$ be a simulator that sees a partial obfuscated trace $\tracerand(H_t^s)$. We show that this simulator can generate a view $(V_t^s)^*$ that is indistinguishable from the actual partial view of the adversary $V_{sk}^s(H_t)$ for all $0\leq s\leq t$,  all polynomial-bounded functions $f_p$, all probabilistic polynomial-time adversaries $\mathcal{A}$, and all distributions $\mathcal{H}_t$; except with a negligible probability, where $t \in \mathbb{N}, H_t \xleftarrow{R} \mathcal{H}_t $ and $sk \leftarrow \textsf{Keygen}(1^k)$.

We recall the notions of (partial) view and (partial) obfuscated trace in $\OSSE$:
\begin{align*} 
 V_{sk}(H_t) &\doteq (id(\mathcal{D}), \mathcal{E}(\mathcal{D}[1]), \dots, \mathcal{E}(\mathcal{D}[n]), \IDX, \Gamma_1, \dots, \Gamma_t  )\\
 V_{sk}^s(H_t)&\doteq (id(\mathcal{D}), \mathcal{E}(\mathcal{D}[1]), \dots, \mathcal{E}(\mathcal{D}[n]), \IDX, \Gamma_1, \dots, \Gamma_s  )\\
 \tracerand(H_t)&\doteq(id(\mathcal{D}), |\mathcal{D}[1]|..|\mathcal{D}[n]|,h, \freqmax, \sizemax,\aprand[{\vec{w}[1]}]..\aprand[{\vec{w}[t]}])\\
 \tracerand(H_t^s)&\doteq(id(\mathcal{D}), |\mathcal{D}[1]|..|\mathcal{D}[n]|,h, \freqmax, \sizemax,\aprand[{\vec{w}[1]}]..\aprand[{\vec{w}[s]}])
\end{align*}

First, note that simulating $id(\mathcal{D})$ is trivial since we assumed that $id(\mathcal{D})=(1,\dots,n)$. Also, encrypted documents $\mathcal{E}(\mathcal{D}[i])$ are indistinguishable from a random string $e_i^* \xleftarrow{R} \{0, 1\}^{|\mathcal{D}[i]|} $ since $\mathcal{E}$ is semantically secure. However, simulating the index $\IDX$ and the search token an label lists $\Gamma_i$ is non-trivial. 

Since the underlying $\FHIPPE$ is SIM-secure, there exists a simulator $\mathcal{S}_{\FHIPPE} $ which can simulate $\FHIPPE$, namely $\mathcal{S}_{\FHIPPE}$ has access to an encryption oracle $\mathcal{S}_{\FHIPPE}$.$\textsf{Encrypt}$ and a token generation oracle $\mathcal{S}_{\FHIPPE}$.$\textsf{GenToken}$ such that the outputs of the two oracles are indistinguishable from those of $\FHIPPE$.$\textsf{Encrypt}$ and $\FHIPPE$.$\textsf{GenToken}$ under the restriction that the inner product predicate of each pair of input to $\FHIPPE$.$\textsf{Encrypt}$ and $\FHIPPE$.$\textsf{GenToken}$ is equal to that of each pair of input to $\mathcal{S}_{\FHIPPE}$.$\textsf{Encrypt}$ and $\mathcal{S}_{\FHIPPE}$. $\textsf{GenToken}$.

$\bullet$ \textbf{Simulate $\IDX[i]$.} At time $s = 0$, $\mathcal{S}$ only holds $\tracerand(H_t^0)$ = $(id(\mathcal{D})$, $|\mathcal{D}[1]|, \dots, |\mathcal{D}[n]|$, $h$, $\freqmax$, $\sizemax)$ by which it generates $\IDX[i]$ as follows: first, $\mathcal{S}$ generates a vector of polynomial coefficients by calling $\genVec(\bot, i)$ where $\bot$ denotes an empty list of keywords; then, $\mathcal{S}$ calls $\mathcal{S}_{\FHIPPE}$.\textsf{Encrypt} ($\genVec(\bot, i))$ to compute $\IDX^*[i]$. It should be noted that $\IDX^*[i]$ can be matched by a token generated by $\mathcal{S}_{\FHIPPE}.\textsf{GenToken}$ when inputting $[i^0, i^1, \dots, i^{\sizemax + 1}]$. $\IDX^*$ can be obtained by combining all $\IDX^*[i]$ together. We claim that $\IDX^*$ is indistinguishable from $\IDX$ which will be proved below.

$\bullet$ \textbf{Simulate $\Gamma_k$.} At time $s\leq t$, $\mathcal{S}$ knows $\tracerand(H^s_t)$, by which it simulates $\Gamma_s$ as follows (simulations of $\Gamma_k$, for $k<s$, can be constructed similarly). Initialize an empty multiset $X$. For $i\in[n]$, add $\aprand[{\vec{w}[s]}][i]$ copies of $[i||0||\text{-1}, h(i)]$ to $X$. Then, for $i\in[|h|]$, add $\aprand[{\vec{w}[s]}][n+i]$ copies of $[w_{\text{-1}}||\text{-1}||0,i]$ to $X$. Run lines~\ref{alg1:part4} onward of Alg.~\ref{alg:gentoken} to get $\texttt{PredLabelPairs}$. Transform each predicate in $\texttt{PredLabelPairs}$ into a token by calling $\mathcal{S}_{\FHIPPE}.\textsf{GenToken}$, and the set of (token, label) pairs is $\Gamma_s^*$. We prove next that $\Gamma_s^*$ and $\Gamma_s$ are indistinguishable.

$\bullet$ \textbf{Indistinguishability.} We prove that $\IDX^*$ and $\Gamma_{i}^*$ are indistinguishable from $\IDX$ and $\Gamma_{i}$ by contradiction: if there exists an adversary $\mathcal{A}$ which can distinguish $\IDX^*$ and $\Gamma_{i}^*$ from $\IDX$ and $\Gamma_{i}$, we show that $\mathcal{A}$ breaks SIM-security.
Assume $\mathcal{A}$ can distinguish $\IDX^*$ and $\Gamma_{i}^*$ from $\IDX$ and $\Gamma_{i}$. Let $\{x_{j, 0}\}$ (resp. $\{x_{j, 1} \}$) be the underlying secrets of $\{\IDX[j]\}$ (resp. $\{\IDX^{*}[j]\}$) and $\{y_{j, 0}\}$ (resp. $\{y_{j, 1}\}$) be the underlying secrets of $\Gamma_{i}$ (resp. $\Gamma_{i}^*$). Consider the following two full security games where one is between $\mathcal{A}$ and $\FHIPPE$, denoted by $G_{\mathcal{A}, \FHIPPE}$, and the other one is between $\mathcal{A}$ and $\mathcal{S}_{\FHIPPE}$, denoted by $G_{\mathcal{A}, \mathcal{S}_{\FHIPPE}}$.  
In both games, the adversary $\mathcal{A}$ issues the same queries as follows:
\begin{enumerate}
	\item Ciphertext query. $\mathcal{A}$ queries $(x_{j, 0}, x_{j, 1}), \forall j \in [n]$.
	\item Token query. $\mathcal{A}$ queries $(y_{j, 0}, y_{j, 1}), \forall j \in [|\Gamma_i|]$ where $|\Gamma_i|$ means the number of tokens in $\Gamma_i$.
\end{enumerate}

In $G_{\mathcal{A}, \FHIPPE}$, with probability $1/2$, $b = 0$ and $\FHIPPE$ outputs $I'$ and $\Gamma_{i}'$ (since $\FHIPPE$ is probabilistic) which are indistinguishable from $\IDX$ and $\Gamma_{i}$.  Likewise, in $G_{\mathcal{A}, \mathcal{S}_{\FHIPPE}}$, with probability $1/2$, $b = 1$ and $\mathcal{S}_{\FHIPPE}$ outputs ${\IDX^*}'$ and ${\Gamma_{i}^*}'$ (since $\mathcal{S}_{\FHIPPE}$ is probabilistic) which are indistinguishable from $\IDX^*$ and $\Gamma_{i}^*$.

Since $\mathcal{A}$ can distinguish $\IDX^*$ and $\Gamma_{i}^*$ from $\IDX$ and $\Gamma_{i}$, it must be able to distinguish $I'$ and $\Gamma_{i}'$ from ${\IDX^*}'$ and ${\Gamma_{i}^*}'$, namely $\mathcal{A}$ can distinguish the game with $\FHIPPE$, denoted by $G_{\mathcal{A}, \FHIPPE}$ (the real world) from the game with $\mathcal{S}_{\FHIPPE}$, denoted by $G_{\mathcal{A}, \mathcal{S}_{\FHIPPE}}$ (the ideal world). This contradicts the fact $\FHIPPE$ is SIM-secure. Therefore, $\mathcal{A}$ cannot distinguish $I^*$ and $\Gamma_i^*$ from $I$ and $\Gamma_i$ and thus $\OSSE$ is adaptively semantically secure.

\end{proof}

\section{Differential Privacy Analysis and Discussion}
\label{section:privacy}

In this section, we compare $\OSSE$ and the proposal by Chen et al.~\cite{chen2018differentially}, henceforth termed $\CLRZ$,
\footnote{This acronym takes the first letter of each author of the paper~\cite{chen2018differentially}.} under the differential privacy framework. 
We chose this framework since it is one of the most widely accepted notions of privacy. 
We first propose differential privacy for documents and keywords, which generalize previous SSE-related differential privacy notions~\cite{chen2018differentially}. 
We explain that they imply access and search pattern protection, respectively. 
Then, we prove the differential privacy guarantees of $\OSSE$ and compare them with $\CLRZ$. We show that none of these mechanisms can simultaneously achieve high theoretical privacy guarantees while providing high utility. 
In Sect.~\ref{section:evaluation} we show that these mechanisms offer adequate protection even when differential privacy deems them as weak, since practical attacks widely differ from the strong implicit adversary assumed in the differential privacy framework.

\subsection{Differential Privacy Definitions}

Differential privacy~\cite{dwork2008differential} (DP) is one of the most broadly accepted theoretical notions of privacy, and has been applied to different areas of privacy, including database privacy~\cite{wagh2016root, toledo2016lower, chen2018differentially}. Differential privacy is characterized by a privacy parameter $\epsilon$, and ensures that neighboring inputs to the privacy-preserving algorithm produce any given output with similar probability, and thus it is hard to gain information about the secret input given any observation. Chen et al.~\cite{chen2018differentially} define differential privacy for access pattern leakage. In the definitions below, we generalize their definition and also instantiate it to account for search-pattern privacy. 
Here, we represent an SSE scheme as a function $\mathcal{SE}$ that takes the client input (a dataset $\mathcal{D}$ and a query vector $\vec{w}$) and outputs the leakage (a trace $T$).

\begin{definition}[Differential Privacy for Documents]\label{def:dpdocuments}
An SSE characterized by function $\mathcal{SE}$ provides $\epsilon$-DP for \emph{documents} iff for any keyword list of length $t$, $\vec{w} \in \Delta^t$, for any pair of neighboring databases $\mathcal{D}, \mathcal{D'} \in 2^{2^\Delta }$ (they differ in exactly one position $i$ and exactly one keyword $w$, i.e., $w$ is in either $\mathcal{D}[i]$ or $\mathcal{D}'[i]$ but not both), and any trace $T$, the following holds:
\begin{equation}
    Pr[\mathcal{SE}(\mathcal{D}, \vec{w}) = T] \leq e^{t\cdot \epsilon} Pr[\mathcal{SE}(\mathcal{D'}, \vec{w}) = T]\,.
\end{equation}
\end{definition}

We chose to call this notion ``differential privacy \emph{for documents}'' instead of ``for access patterns'', since the hidden variable in this definition is the database that contains the documents, $\mathcal{D}$. Intuitively, satisfying the above definition guarantees that no one can determine if a document contains a keyword given the trace. This implies that an SSE that provides $\epsilon$-DP for documents also provides \emph{access pattern privacy}.

\begin{definition}[Differential Privacy for Keywords]\label{def:dpkeywords} 
An SSE characterized by function $\mathcal{SE}$ provides $\epsilon$-DP for \emph{keywords} iff for any database $\mathcal{D} \in 2^{2^\Delta}$, for any pair of neighboring keyword lists $\vec{w}, \vec{w}' \in \Delta^{|\vec{w}|} $ ($\vec{w}$ and $\vec{w}'$ differ in only one element, $w[i]\neq w'[i]$), and any trace $T$, the following holds:
    \begin{equation}
    Pr[\mathcal{SE}(\mathcal{D}, \vec{w}) = T] \leq e^{d \cdot \epsilon} Pr[\mathcal{SE}(\mathcal{D}, \vec{w}') = T]\,.
    \end{equation}
    Here, $d$ is the number of different documents between $\mathcal{D}(\vec{w}[i])$ and $\mathcal{D}(\vec{w}'[i])$.
\end{definition}

A scheme that provides DP for keywords guarantees that no one can determine, through observing the trace (allowed leakage), whether a client is searching for one keyword list or the other.
This in turn implies \emph{search pattern privacy}, i.e., it prevents the server from guessing whether or not two queries were performed for the same keyword.

\subsection{Comparison of Differential Privacy Guarantees in SSE}

We compare $\OSSE$ and $\CLRZ$ in terms of differential privacy.
We leave ORAM and PIR schemes out of this comparison since, even though they hide the location of individual documents retrieved, they would require adding differentially private noise to the number of documents retrieved per query; doing this, on top of their already high overhead, would make them impractical.

\begin{theorem}\label{theorem:dp-dpsse}
	$\OSSE$ provides $\epsilon$-differential privacy for documents and keywords, where $\epsilon = \ln\left(\frac{\TPR}{\FPR}\frac{1-\FPR}{1-\TPR}\right)$, with $\TPR=p+(1-p)q$ and $\FPR=q$.
\end{theorem}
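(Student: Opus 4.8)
The plan is to bound, for every coordinate of the obfuscated access pattern, the ratio of the probability that the coordinate takes a given value under $\mathcal{D}$ (resp.\ $\vec{w}$) versus under the neighboring $\mathcal{D}'$ (resp.\ $\vec{w}'$), and then multiply these per-coordinate bounds together, using the fact that the coordinates of $\aprand[{\vec{w}[i]}]$ are generated independently (independent coin flips and independent geometric draws in Alg.~\ref{alg:gentoken}). Since the non-access-pattern components of the obfuscated trace $\tracerand$ ($id(\mathcal{D})$, the document lengths, $h$, $\freqmax$, $\sizemax$) are identical for neighboring inputs, the whole ratio $\Pr[\mathcal{SE}(\mathcal{D},\vec w)=T]/\Pr[\mathcal{SE}(\mathcal{D}',\vec w)=T]$ reduces to a ratio of probabilities of the obfuscated access pattern.

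For \textbf{DP for documents}: neighboring databases $\mathcal{D},\mathcal{D}'$ differ by one keyword $w$ in one document $\mathcal{D}[i]$. First I would argue that for any single query $\vec w[j]$, this change affects at most two coordinates of $\aprand[{\vec w[j]}]$: coordinate $i$ (since $w\in\mathcal{D}[i]$ vs.\ $w\notin\mathcal{D}[i]$ toggles the distribution in \eqref{eq:dist1} between $\Bernoulli{p}+\Geometric{}{1-q}$ and $\Geometric{}{1-q}$), and possibly the non-match coordinate $n+l$ for $l=h(i)$ via the change in $g_l$ in \eqref{eq:dist2}. For coordinate $i$, I would show the worst-case ratio of $\Pr(\aprand[w][i]=k)$ under the two cases is maximized and bounded by $e^\epsilon$ with $\epsilon=\ln\!\big(\frac{\TPR}{\FPR}\frac{1-\FPR}{1-\TPR}\big)$: the memorylessness observation in Sect.~\ref{section:geo} already tells us the ratio at $k\ge 1$ equals the ratio at $k=1$, namely $\frac{p+(1-p)q}{q}$ after a short computation with $\Geometric{}{1-q}$ and $\Bernoulli{p}$, while the ratio at $k=0$ is $\frac{1-p}{1}\cdot\frac{1}{1}$ type term giving $\frac{1-\TPR}{1-\FPR}<1$; taking the max of the ratio and its reciprocal over $k$ yields exactly $e^\epsilon$. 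For the extra non-match coordinate $n+l$, the binomial parameter changes by one, and I would check that the resulting ratio is also bounded by this same $e^\epsilon$ (intuitively, one fewer Bernoulli$(p)$ trial in the sum contributes a factor comparable to a single coin's worth of information, which is dominated by the per-document bound). Multiplying over the $t$ queries gives the $e^{t\epsilon}$ factor in Def.~\ref{def:dpdocuments}.

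For \textbf{DP for keywords}: neighboring query vectors $\vec w,\vec w'$ differ only in position $i$, so only the $i$th row of the obfuscated access pattern changes, from $\aprand[{\vec w[i]}]$ to $\aprand[{\vec w'[i]}]$. The two rows are generated by the same mechanism but with membership data $\mathcal{D}(\vec w[i])$ vs.\ $\mathcal{D}(\vec w'[i])$; these differ in $d$ documents, and each such document flips coordinate $j$ between the two cases of \eqref{eq:dist1} (plus a bounded effect on the corresponding non-match coordinates). Applying the same per-coordinate bound $e^\epsilon$ and multiplying over the $d$ differing documents yields the $e^{d\epsilon}$ factor of Def.~\ref{def:dpkeywords}. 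I would present this as a corollary of the document analysis since the per-coordinate ingredient is identical.

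The main obstacle I expect is controlling the contribution of the non-match coordinates ($\aprand[w][n+l]\sim\Binomial{g_l}{p}+\Geometric{}{1-q}$), whose parameter $g_l$ shifts by one between neighbors: I need to show the likelihood ratio for a $\Binomial{g}{p}+\Geometric{}{1-q}$ versus a $\Binomial{g-1}{p}+\Geometric{}{1-q}$ convolution is bounded by $e^\epsilon$ uniformly in the observed value, which is messier than the clean memoryless argument for the per-document coordinate. The likely fix is to observe that $\Binomial{g}{p}$ is itself a sum of independent $\Bernoulli{p}$'s, so one extra Bernoulli trial changes the trace the same way one document's membership bit does, reducing this case to the already-established per-coordinate bound (possibly after absorbing it by a careful choice of which coordinate ``absorbs'' the discrepancy, so that the toggled document is charged exactly once rather than twice); alternatively one argues the effect is dominated and the stated $\epsilon$ already accounts for it. I would make sure the final multiplication is tight, charging exactly $t$ (resp.\ $d$) factors of $e^\epsilon$ and no more.
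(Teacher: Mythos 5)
Your high-level route is the same as the paper's (factorize the likelihood ratio over the coordinates of the obfuscated access pattern, exploit the memorylessness of the geometric noise, charge one factor per query resp.\ per differing document), but there is a genuine gap in how you handle the two coordinates that a single toggled membership bit affects. First, a local error: the maximum over observed values of the per-document ratio and its reciprocal is $\max\{\TPR/\FPR,\,(1-\FPR)/(1-\TPR)\}$, not their product, so it does not "yield exactly $e^\epsilon$"; that max is the $\CLRZ$-style bound. More importantly, in $\OSSE$ the toggled keyword changes \emph{both} coordinate $k$ (via \eqref{eq:dist1}) and the non-match coordinate $n+h(k)$ (via $g_l=\countermax-|\mathcal{D}(w)_l|$ in \eqref{eq:dist2}), and your two fallback options fail: the non-match contribution is not "dominated" (it genuinely multiplies the ratio), and bounding each of the two coordinates separately by $e^\epsilon$ gives $e^{2\epsilon}$ per query, which is strictly weaker than the theorem. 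The "one extra Bernoulli is like a membership bit" reduction is also not a proof for the non-match coordinate, because what you must bound is the likelihood ratio of the \emph{convolution} $\Binomial{g}{p}+\Geometric{}{1-q}$ versus $\Binomial{g\pm 1}{p}+\Geometric{}{1-q}$ uniformly in the observed value, and the memorylessness trick does not apply to it directly.

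What is needed (and what the paper does) is: (i) a convolution lemma showing, uniformly in $\alpha$, that $\Pr(G+B_{g+1}=\alpha)/\Pr(G+B_{g}=\alpha)\le (p+q(1-p))/q$ and the reverse ratio is $\le 1/(1-p)$, proved by expanding the convolution and applying Pascal's rule ${g+1\choose\beta}={g\choose\beta}+{g\choose\beta-1}$ with an index shift; and (ii) the accounting observation that the two affected coordinates shift in \emph{opposite} directions, so their bounds are complementary: when the document gains the keyword, coordinate $k$ gains a Bernoulli term (ratio at most $\TPR/\FPR$) while the binomial at coordinate $n+h(k)$ loses one trial (ratio at most $1/(1-p)=(1-\FPR)/(1-\TPR)$), and the product of these two bounds is exactly $e^\epsilon$. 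This is precisely what lets you charge one factor $e^\epsilon$ per query (documents case) and one per differing document (keywords case) rather than two. For the keywords case you additionally need the $m$-fold version of the lemma, because several of the $d$ differing documents may share a label, shifting the same binomial parameter by more than one; one then checks the exponent aggregates to $|D_{0,1}|+|D_{1,0}|=d$ in the worst case. Without (i) and (ii) your argument either loses a factor of two in the exponent or rests on an unproved domination claim.
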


The proof is in Appendix~\ref{section:dpproof}.
Table~\ref{table:DPcomp} compares the $\epsilon$ values provided by $\OSSE$ and $\CLRZ$ in terms of their true positive and false positive rates (TPR and FPR).
The table shows that $\CLRZ$ usually provides stronger DP for documents than $\OSSE$ (larger $\epsilon$ indicates weaker privacy guarantees).
On the contrary, $\OSSE$'s guarantee for protecting access patterns is weaker, but it also provides differential privacy for keywords, which implies a certain level of search pattern privacy. 

\begin{table}[t]
\centering
    \begin{tabular}{r | c c}
         & \textbf{DP for documents} & \textbf{DP for keywords}\\ \hline       
        $\CLRZ$ & $\epsilon=\ln\left(\max\left\{\frac{\TPR}{\FPR},\frac{1-\FPR}{1-\TPR}\right\}\right)$ & None ($\epsilon=\infty$) \\
        $\OSSE$ & $\epsilon=\ln\left(\frac{\TPR}{\FPR}\frac{1-\FPR}{1-\TPR}\right)$ & $\epsilon=\ln\left(\frac{\TPR}{\FPR}\frac{1-\FPR}{1-\TPR}\right)$\\ 
        \hline
    \end{tabular}
    \caption{Differential Privacy (DP) guarantees.}\label{table:DPcomp}
\end{table}

Achieving high privacy regimes requires undesirable $\TPR$ and $\FPR$ values for both schemes.
For example, achieving $\epsilon=1$ with a moderately high $\TPR=80\%$ requires $\FPR\approx 45\%$ for $\CLRZ$ and $\FPR\approx 60\%$ for $\OSSE$, which is prohibitive in terms of bandwidth.
However, if the client wants to get a random subset of $\TPR=30\%$ of the documents that match a keyword, they can achieve high privacy ($\epsilon=1$) with only $\FPR=11\%$ (resp.~$13\%$) with $\CLRZ$ (resp.~$\OSSE$).
In this high-privacy low-utility setting, $\CLRZ$ and $\OSSE$ offer similar DP for documents, but $\OSSE$ additionally provides DP for keywords (i.e., search pattern protection).

There are cases where $\TPR$ must be kept close to one and $\FPR$ close to zero for performance issues (e.g., false positives imply a bandwidth increase).
The fact that $\OSSE$ cannot provide a low $\epsilon<1$ in these cases does not mean it is not effective at deterring attackers.
In fact, recent work~\cite{demertzisseal} has shown (for an ORAM-based defense technique) that even a small amount of noise can seriously harm current database and query recovery attacks.
We confirm that this is also true for $\OSSE$ against different attacks in Section~\ref{section:evaluation}.
Indeed, we show that $\TPR=0.9999$ and $\FPR=0.025$, which gives a large $\epsilon\approx 13$ for $\OSSE$, is enough to deter different attacks~\cite{islam2012access, liu2014search, pouliot2016shadow, cash2015leakage}.
$\CLRZ$, however, is significantly more vulnerable to these attacks since it does not provide any search pattern protection.
The explanation for this high protection despite low differential privacy guarantees is that differential privacy assumes a worst-case scenario where the adversary either knows every single entry in the dataset except for one, or knows all of the user queries but one.
Assuming this strong adversary is unrealistic in most cases and, besides that, there exist practical attacks~\cite{islam2012access, liu2014search, pouliot2016shadow, cash2015leakage, blackstonerevisiting} that do not require an adversary this powerful.

\section{Complexity Analysis}
\label{section:complexity}

We study the communication and computation complexity of $\OSSE$.
Note that $\OSSE$ does not require client-side storage after outsourcing the encrypted database and the search index, other than for storing parameters and keys. 
We omit the initialization from the complexity analysis, since it is only done once. Table~\ref{table:notation_complexity} summarizes the notation of this section.

\begin{table}[t]
\centering
    \begin{tabular}{r p{6.5cm}}
        \textbf{Notation} & \textbf{Description}\\ \hline       
        $\freqmax$ & Maximum keyword frequency (maximum number of documents that contain any given keyword). \\
        $\sizemax$ & Maximum no.~of distinct keywords in a document.\\ 
        $\countermax$ & Maximum value of $\counteri{}$ allowed.\\
        $\toksize$ & Size of a query token and label pair (e.g., bytes). \\
        $\docsize$ & Maximum size of a document (e.g., bytes). \\ \hline
    \end{tabular}
    \caption{Notation for the Complexity Analysis}\label{table:notation_complexity}
\end{table}
 
\subsection{Communication Overhead}

The communication complexity refers to the total number of tokens sent from client to server when querying for a keyword $w$, and the number of documents that the server returns to the client as a response. We study the communication overhead, i.e., the  \emph{average} communication cost of $\OSSE$ compared to the average cost of a standard SSE scheme. Let $E_w\doteq\text{E}\{\mathcal{D}(w)\}$ be the expected number of documents that contain keyword $w$. We assume that $q$ is small enough so that $n\cdot q<\freqmax$, and use $|h|=\freqmax$ and $\countermax=3\ln n/\ln\ln\freqmax$.

First, when querying for keyword $w$, the client calls $\genToken$ (Alg.~\ref{alg:gentoken}) to generate tokens. The number of tokens generated in the first loop (lines \ref{alg1:part1}-\ref{alg1:part1end}) is the sum of $|h|\cdot\countermax$ Bernoulli distributions with parameter $p$, while the remaining two loops generate a number of tokens that is the sum of $n$ and $|h|$ Geometric distributions with parameter $1-q$. Therefore, the \emph{expected} number of tokens sent from client to server is
\begin{align*}
 \nmescliser&=|h|\cdot \countermax \cdot p + (n+|h|)\cdot\frac{q}{1-q}\\
    &\approx|h|\cdot \countermax \cdot p + n\cdot q<\freqmax\cdot(\countermax +1)\,,
\end{align*}
where we have used that $n\gg |h|$ and $\frac{q}{1-q}\approx q$ since we are assuming that $q$ is close to zero.

Since the server replies with the documents that are matched \emph{at least once}, the expected number of documents returned is
\begin{equation}
 \nmessercli=E_w(p+q-pq)+(n-E_w) q\leq E_w p +n q<E_w + \freqmax\,.
\end{equation}

The total communication cost of $\OSSE$ is $\nmescliser\cdot\toksize + \nmessercli\cdot\docsize$. A standard SSE just sends a single query message to the server and receives, on average, $E_w$ documents. Therefore, its cost is $\approx E_w\cdot\docsize$. Dividing these two expressions and using some basic algebra, we can express the \emph{communication overhead} of $\OSSE$ as
\begin{equation}
 \overhead{\OSSE}<\frac{\freqmax}{E_w}\left((\countermax+1)\cdot\frac{\toksize}{\docsize}+1\right)+1\,.
\end{equation}

In many cases, $(\countermax+1)<\docsize/\toksize$ (for example, if the database contains images, the encrypted documents can easily be 10 times larger than the token size). In that case, the cost is just $ \overhead{\OSSE}<2\cdot \freqmax/E_w+1$.

This cost depends on $E_w$, which in turn depends on the query and keyword distribution. We perform the analysis for three possible distributions: uniform, Zipfian (typical in natural language~\cite{zipf1932selected}), and the worst-case distribution for $\OSSE$ that considers that the queried keywords appear in a single document. We show how to compute $E_w$ for each of these distributions in Appendix~\ref{app:distributions}. We get
\begin{equation}
 \overhead{\OSSE}<\begin{cases}
          3\,,\qquad&\text{uniform,}\\
          1.36+0.61\cdot \log |\Delta|\,,&\text{Zipf,}\\
          1+2\cdot \freqmax\,,&\text{worst-case.}\\
         \end{cases}
\end{equation}
The cost is increased by a factor no bigger than $\countermax$+1 for datasets where each document size is comparable to the size of each keyword.

\subsection{Computational Complexity}\label{section:comp_complexity}

We disregard the initialization cost (generating the index) since it is only done once at the beginning of the protocol. We measure the computational complexity of a query as the number of evaluations that the server needs to compute when searching for a keyword $w$. Every time the server receives a token, it calls the function $\FHIPPE$.$\textsf{Query}$ on average $n/|h|$ times (since this is the average number of documents with a given label). Therefore, the number of calls to $\FHIPPE$.$\textsf{Query}$ is
\begin{equation}
 \computation{\OSSE}=\nmescliser\cdot \frac{n}{|h|}<n\cdot(\countermax+1)\,.
\end{equation}

\subsection{Faster $\OSSE$}
\label{section:dpsse_fast}

$\OSSE$ uses a hash function $h$ to assign a label to each document. This requires a certain $\countermax$ value to ensure that no token can match more than one document. We explain a variation of the algorithm that assigns labels to documents that achieves a lower $\countermax$ and therefore reduces the communication and computational cost of $\OSSE$. In this variation, there are two publicly known hash functions $h_1$ and $h_2$. When building the search index, the client generates roots $(w||l||\counteri{w,l})$ by choosing $l$ between $h_1(id(D))$ and $h_2(id(D))$ as the one that minimizes the counter. 
This way, the value of $\countermax$ is only $O(\ln \ln \freqmax)$~\cite{azar1999balanced}. 
This reduction of $\countermax$ can provide significant computational and communication advantages, but at the cost of a considerable reduction of the differential privacy guarantees (we omit this analysis here for space constraints). 
This modification of $\OSSE$ is still highly useful when differential privacy protection is not required (e.g., against known attacks).

\subsection{Comparison with Other Schemes}

The only searchable ORAM-based SSE scheme is TWORAM~\cite{garg2016tworam}, which has a communication overhead of $O(\log n \log\log n)$ (at least four rounds) and client-side storage of $O(\log^2 n)$. As explained above, under some realistic assumptions on the query and keyword distribution, $\OSSE$ requires less communication cost. Also, $\OSSE$ does not require client-side storage except for storing parameters like $p$ and $q$, and only uses one communication round.

Even though Fully Homomorphic Encryption (FHE) schemes are competitive performance-wise~\cite{akavia2018secure, akavia2019setup}, they can only return a fixed, constant number of results on each query due to the fixed length of each circuit.
Adapting FHE schemes to variable length results requires multiple communication rounds, whereas our scheme requires a single round.

The only practical privacy-preserving SSE scheme we are aware of is $\CLRZ$~\cite{chen2018differentially}.
In $\CLRZ$, the client sends a single query token, and receives on average the same amount of documents from the server as in $\OSSE$.
\footnote{Chen et al.~\cite{chen2018differentially} alleviate the false negatives by introducing document redundancy, but we do not consider this here for a fair comparison. Our proposal is also compatible with this document redundancy approach.}
Thus, the communication overhead of $\CLRZ$ is slightly smaller than that of $\OSSE$.
The computational cost of $\CLRZ$ depends on the underlying SSE that it implements; when combined with an inexpensive SSE, $\CLRZ$ can achieve less overhead than $\OSSE$.

\section{Evaluation}
\label{section:evaluation}

We evaluate the effectiveness and efficiency of $\OSSE$. First, we evaluate $\OSSE$ and $\CLRZ$~\cite{chen2018differentially} against different query recovery attacks:
\begin{enumerate}
	\item Liu et al.'s \emph{frequency attack}~\cite{liu2014search}, that recovers queries using query frequency information and search patterns.
	\item The \emph{IKK attack}~\cite{islam2012access}, that recovers the queries using keyword co-occurrence data, ground-truth information of a subset of the client queries, and access and search patterns.
	Chen et al.~use this attack to evaluate $\CLRZ$~\cite{chen2018differentially}.
	\item Cash et al.'s \emph{count attack}~\cite{cash2015leakage}, that builds sets of candidate keywords for each query based on the query response volume, and then refines these sets by using co-occurrence information until only one feasible assignment remains.
	\item The \emph{graph matching} attack by Pouliot et al.~\cite{pouliot2016shadow}, that uses keyword co-ocurrence information and does not require any ground-truth information about the queries or data set to be effective.
\end{enumerate}
Blackstone et al.~\cite{blackstonerevisiting} recently proposed a query recovery attack (\emph{subgraph attack}) that uses access pattern leakage and partial database knowledge.
The attack tries to match queries to keywords by identifying patterns from the partial database knowledge in the observed access patterns.
Since $\CLRZ$ and $\OSSE$ randomize the access patterns, adapting the attack against these defenses is not trivial, and therefore we do not include it in our evaluation.

We run $\CLRZ$ without using its document redundancy technique~\cite{chen2018differentially}: this technique aims at reducing false negatives and is trivially compatible with $\OSSE$; we decide not to use it for simplicity and because we believe it does not affect performance results.
For a fair evaluation, we \emph{adapt} the attacks above to perform well against $\CLRZ$ and $\OSSE$. 
We explain the details in Appendix~\ref{app:adapt}.\footnote{Our evaluation code is available at \href{https://github.com/simon-oya/NDSS21-osse-evaluation}{https://github.com/simon-oya/NDSS21-osse-evaluation}}
  
Second, we report the running time of a prototypical implementation of $\OSSE$ in Python on an Intel(R) E7-8870 160-core Ubuntu 16.04 machine clocked at 2.40 GHz with 2 TB of system memory.\footnote{Our prototype implementation is available at \href{https://github.com/z6shang/OSSE}{https://github.com/z6shang/OSSE}}

We use the Enron email dataset in our experiments,\footnote{\url{https://www.cs.cmu.edu/~enron/}} which contains 30\,109 emails.
We follow the keyword extraction process in related work~\cite{islam2012access}, then we ignore keywords that do not appear in the English dictionary and remove stopwords; we use the following $|\Delta|$ most common keywords remaining as keyword universe ($|\Delta|$ varies between experiments).

\subsection{Query Frequency Attack}
\label{sec:freq}

We implement and run the first frequency attack by Liu et al.~\cite{liu2014search}. 
We take $|\Delta|=250$ keywords and download their query frequency over 50 consecutive weeks\footnote{The last week we take is the second week of April, 2020.} from Google Trends.\footnote{\url{trends.google.com}}
We store this frequency information in a $50\times 250$ matrix $F$ where $F[i,j]$ is the probability that the client queries for the $j$th keyword in the $i$th week.
The client generates $N_q\in\{100,300,1\,000\}$ queries per week following the distributions in $F$.
We evaluate the case where the client uses no defense, when they use $\CLRZ$, and when they use $\OSSE$.
The server observes the (possibly obfuscated) access patterns and groups them into $N_c$ clusters using a $k$-means algorithm (this is trivial for $\CLRZ$).
We give the adversary the \emph{true} number of distinct keywords queried by the client to use as $N_c$.
This is a worst-case against $\OSSE$.
The adversary labels all queries in cluster $k\in[N_c]$ as the $j$th keyword, where $j$ is the column in $F$ that is closest, in Euclidean distance, to the frequency trend of the cluster over the 50 weeks.

Figure~\ref{fig:liu} shows the attack accuracy (number of correctly identified queries divided over the total number of queries) averaged over 20 runs. 
The shaded areas are the $95\%$ confidence intervals for the mean.  
We vary the $\FPR$ and query number $N_q$, and fix $\TPR=0.9999$.
Since clustering access patterns (i.e., finding the search pattern) is trivial for $\CLRZ$, the accuracy against this defense is independent of $\FPR$.
$\OSSE$ achieves higher protection against the attack since it obfuscates the access patterns of each query independently.

\begin{figure}[t]
\centering
\includegraphics[width=.8\linewidth]{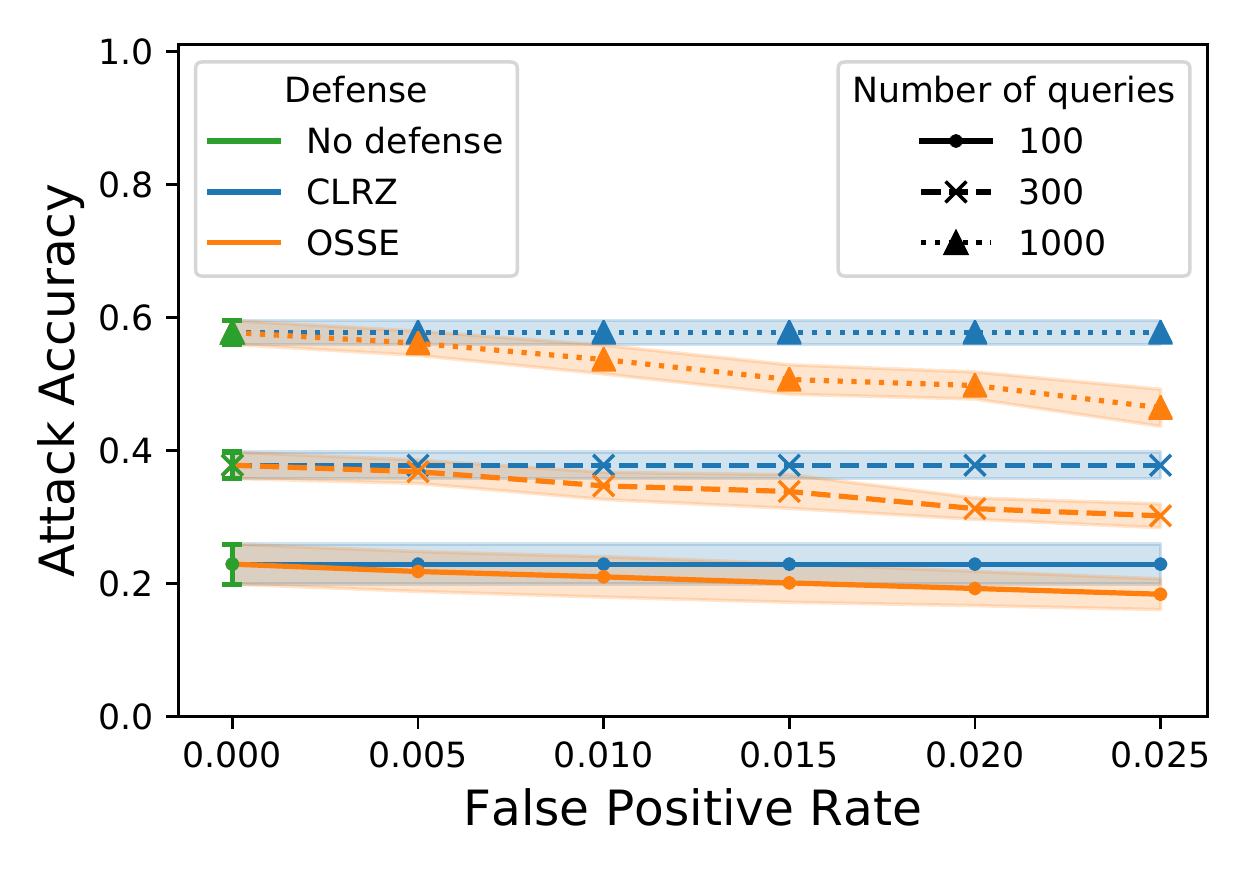}
\caption{Accuracy of Query Frequency Attack}
\label{fig:liu}
\end{figure}

\subsection{IKK Attack}

We run the IKK attack~\cite{islam2012access} using a keyword universe size $|\Delta|=500$. 
This attack relies on the keyword co-occurrence, i.e., a $|\Delta|\times|\Delta|$ matrix $M$ where $M[i,j]$ is the percentage of documents that have both the $i$th and $j$th keyword.
The attack computes this matrix from a training set, and compares it with a co-occurrence matrix computed from the observed access patterns, $M'$.
Unlike in the original IKK attack setting, where the queries are unique, we allow repeated queries and modify the attack accordingly.
We also adapt the attack to perform better against the obfuscated access patterns generated by $\OSSE$ and $\CLRZ$ (see Appendix~\ref{app:adapt}).
We call this improved version IKK$^*$.
In each experiment, the client generates $N_q=400$ queries (using a Zipfian distribution), we give each one with probability $0.15$ to the attacker. 
We use the same Enron data for training and testing~\cite{islam2012access}.\footnote{We use the following attack-specific parameters: initial temperature $200$, cooling rate $0.9999$, and reject threshold $1\,500$.}

Figure~\ref{fig:ikk} shows the query recovery rate, averaged over 20 runs, for $\TPR=0.9999$ and different $\FPR$ values.
We plot the average baseline $15\%$ ground-truth queries known to the attacker as reference.
We see that $\IKKS$ improves over $\IKK$, as expected, and that $\OSSE$ reduces the attack accuracy almost by half compared to $\CLRZ$.
This is due to two facts: first, $\OSSE$ hides the search patterns and makes it hard for the adversary to use the known queries ($15\%$) towards identifying other queries.
Second, since each access pattern is generated independently in $\OSSE$, the search space for $\IKK$ against this attack is of size $|\Delta|^{200}$, which is much larger than $|\Delta|!/(|\Delta|-200)!$, the size of the search space in $\CLRZ$.

\subsection{Count Attack}

The count attack~\cite{cash2015leakage} uses volume information to build a set of candidate keywords for each query and refines them with co-occurrence information until only one possible matching remains.
We optimize the attack against $\OSSE$ and $\CLRZ$ by modifying a generalization of the attack found in the original paper~\cite{cash2015leakage} (see Appendix~\ref{app:adapt}).
We give the adversary the plaintext database to build the auxiliary volume and co-occurrence information.
However, the adversary does not see any ground-truth query.
We run the experiments for the same parameters as IKK ($|\Delta|=500$, $N_q=400$, Zipfian distribution for queries).
When the algorithm fails to find any plausible matching, we set the accuracy to $1/|\Delta|$ (random guessing).

Figure~\ref{fig:count} shows the accuracy of the count attack (20 runs) as well as the frequency of failure (inconsistency rate) against $\OSSE$.
The count attack achieves perfect accuracy when no defense is applied, but its accuracy already decreases to $0.65$ against $\CLRZ$ with $\TPR=0.9999$ and $\FPR=0$.
The accuracy remains stable against $\CLRZ$ as $\FPR$ increases, but it sharply decreases against $\OSSE$, mostly due to inconsistencies.
These inconsistencies stem from the fact that the heuristics of the count attack require that the observed volume and co-occurrences of queries fall inside certain expected intervals.
$\OSSE$ generates freshly random access patterns, so the probability that at least one observed volume falls outside an interval is high, which explains the high inconsistency rate.

\begin{figure*}[t]
\begin{minipage}{0.32\linewidth}
	\centering
	\includegraphics[width=\linewidth]{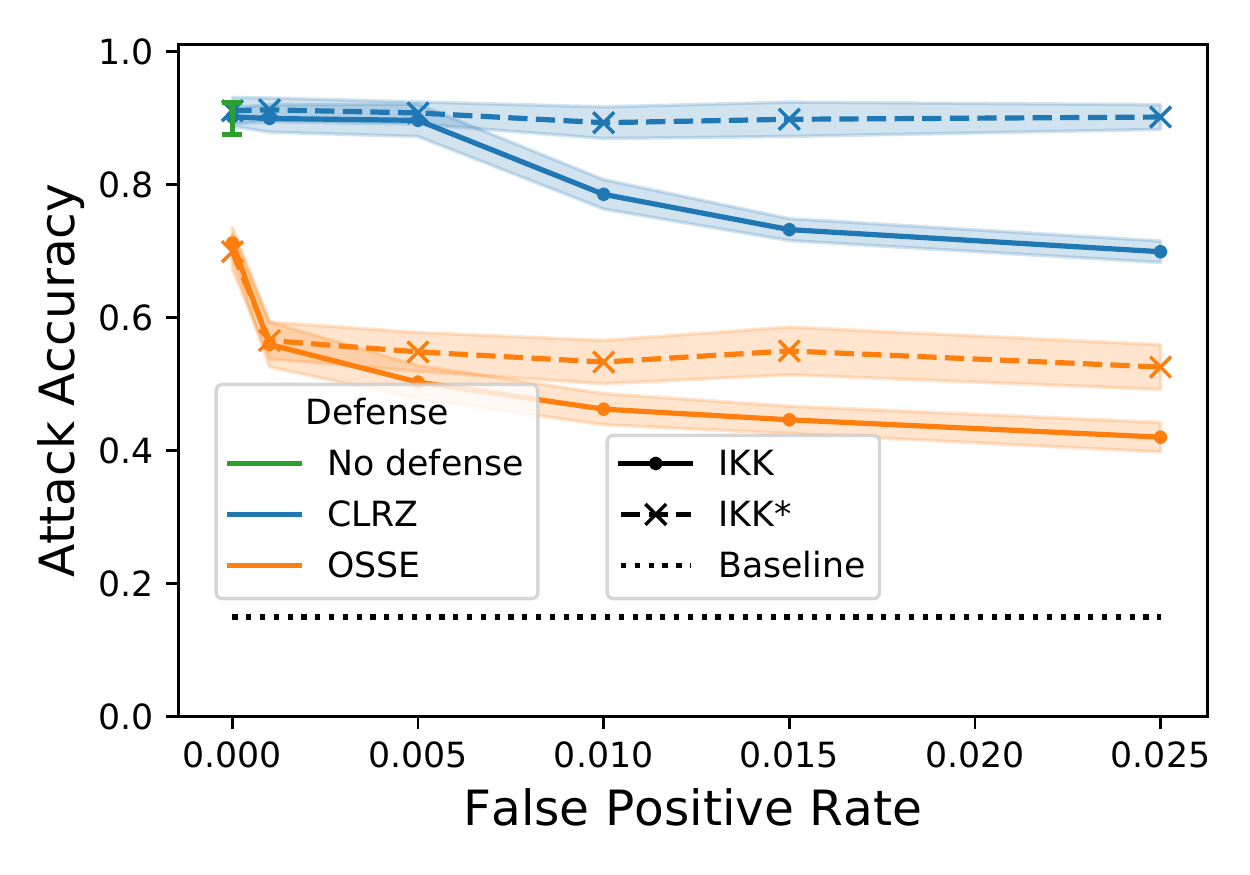}\\
	\caption{Accuracy of $\IKK$ Attack}
	\label{fig:ikk}
\end{minipage} \hfill
\begin{minipage}{0.32\linewidth}
	\centering
	\includegraphics[width=\linewidth]{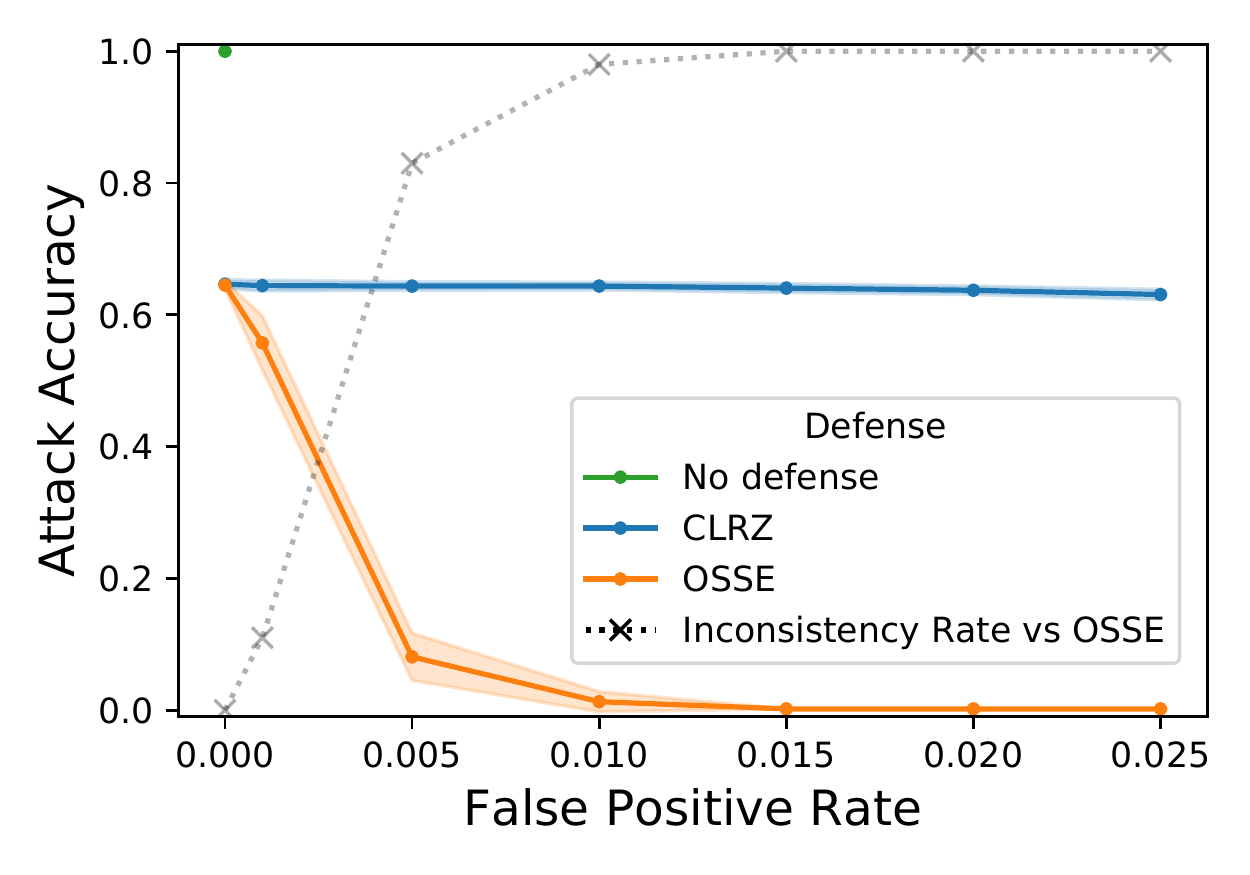}\\
	\caption{Accuracy of Count Attack}
	\label{fig:count}
\end{minipage} \hfill
\begin{minipage}{0.32\linewidth}
	\centering
	\includegraphics[width=\linewidth]{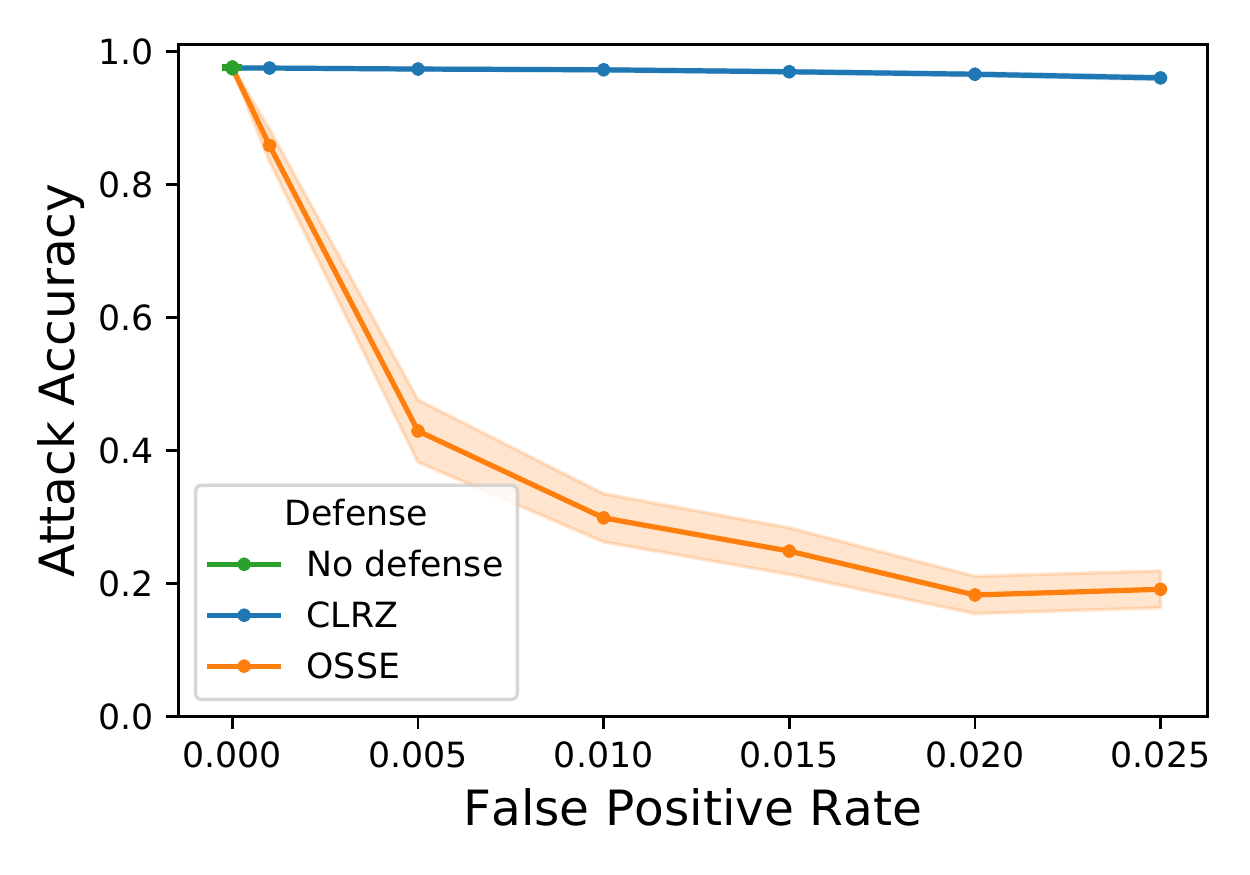}\\
	\caption{Accuracy of Graph Matching}
	\label{fig:graphm}
\end{minipage}
\end{figure*}

\subsection{Graph Matching Attack}

We implement the graph matching attack~\cite{pouliot2016shadow} using the PATH algorithm~\cite{zaslavskiy2008path}. 
We randomly split Enron emails evenly into a training and testing set, and give the adversary the training set only (we redo the split in each run).
The adversary does not have any ground truth information about the queries.
This is a more realistic scenario than considered by $\IKK$ and the count attack. 
We take $|\Delta|=250$ keywords instead of 500, since this attack is computationally demanding.
This attack requires a large number of queries to work properly, since it has imperfect information about the dataset.
We use $N_q=2\,000$ queries (Zipfian distribution).\footnote{The objective function of the graph matching attack has an hyperparameter $\alpha\in[0,1]$. We used $\alpha=0$ since it yielded the highest accuracy.}

Figure~\ref{fig:graphm} shows the query recovery rate of the graph matching attack, averaged over 100 runs, for $\TPR=0.9999$ and different $\FPR$ values.
The attack achieves an accuracy of $\approx 99\%$ against $\CLRZ$, which is surprisingly large considering that the adversary has imperfect information. 
The performance against $\OSSE$ drops below $50\%$ with only $0.5\%$ false positives and keeps decreasing as $\FPR$ grows.

\subsection{Running Time}
We measure the average running time of the $\OSSE$ functions $\textsf{BuildIndex}$, $\textsf{Trapdoor}$, and $\textsf{Search}$ on Enron dataset.
We use the function-hiding inner product encryption scheme by Kim et al.~\cite{kim2018function} with the $\texttt{MNT159}$ pairing curve.
We limit the number of keywords in each document to $\sizemax=300$ by splitting large documents into smaller ones.
By default, $97\%$ of the documents have no more than 300 keywords.
After splitting, the number of documents increases to $n=30\,562$ and $\freqmax\approx 2\,000$.
With the single hashing method, we get $\countermax=7$, while the dual hashing method explained in Sect.~\ref{section:dpsse_fast} yields $\countermax=3$.
We evaluate the running time of the latter. We use $p=0.9999$ and $q=0.01$.

Table~\ref{table:run-time-parallel} summarizes the running times. 
Chen et al.~\cite{chen2018differentially} report running times for $\CLRZ$ in Enron dataset around 100 seconds for obfuscating the search index, and less than 200 milliseconds per query.
Even though the running times of $\OSSE$ in our experiment (e.g., half an hour for a search) can be reduced by switching from Python to a more efficient language and parallelizing the queries, our scheme is substantially slower than $\CLRZ$.
In exchange, $\OSSE$ provides search pattern obfuscation, a rare property that we have shown provides significant protection against different query recovery attacks.

\begin{table}[t]
\centering
    \begin{tabular}{cccc}
        \hline
        \texttt{\# cores} & \textsf{BuildIndex} (min) & \textsf{Trapdoor} (s) & \textsf{Search} (min)\\
         \hline 
         4 & 272.5  & 580.7  & 1099.1  \\
         8 & 136.3  & 290.5  & 549.6  \\
         16 & 68.2  & 145.3  & 274.8  \\
         32 & 34.1  & 72.8  & 137.4  \\
         64 & 17.1  & 36.4  & 68.7  \\
         128 & 8.5  & 18.2  & 34.4  \\
         160 & 6.9  & 14.7  & 27.5  \\
         \hline 
    \end{tabular}
    \caption{Running Times}\label{table:run-time-parallel}
\end{table}

\subsection{Summary of Results}

Our experiments reveal that $\OSSE$ achieves higher privacy protection than $\CLRZ$ against a variety of query recovery attacks.
This includes a search pattern-based attack~\cite{liu2014search} and access pattern-based attacks with ground-truth information about queries and dataset (IKK~\cite{islam2012access}), ground-truth information about the dataset (count~\cite{cash2015leakage}), and no ground-truth information (graph matching~\cite{pouliot2016shadow}).
The advantage of $\OSSE$ over $\CLRZ$ comes from the fact that $\OSSE$ generates each access pattern independently at random, which hides the search pattern.
Interestingly, our evaluation shows that this undermines attacks that rely on access pattern leakage.
This is because these attacks~\cite{islam2012access, cash2015leakage, pouliot2016shadow} formulate the query recovery problem by implicitly assuming the adversary can distinguish between queries for distinct keywords (i.e., search pattern leakage).
When this is not true, the adversary needs to increase their search space (IKK) or perform clustering before running the attack (frequency attack, count attack, and graph matching).
This causes an extra source of error for the adversary.

All these privacy benefits come at the cost of a high running time.
Our scheme is therefore adequate when the data owner values privacy over running time, and when the system is bandwidth-constrained and cannot afford to use multi-round bandwidth-demanding schemes such as ORAM-based ones~\cite{garg2016tworam}.

\section{Conclusions}
\label{section:conclusions}

Searchable Symmetric Encryption (SSE) allows a data owner to outsource its data to a cloud server while maintaining the ability to search over it.
Existing SSE schemes either prevent leakage but are prohibitive in terms of their cost, or are efficient but extremely vulnerable to query and database identification attacks.
In 2018, Chen et al.~proposed a middle-ground solution that is cost efficient and partially protects which documents match each query, i.e., the access pattern. However, this scheme leaks whether or not the same keyword is being searched multiple times, i.e., the search pattern.

In this work, we propose $\OSSE$, an adaptively semantically secure SSE scheme that obfuscates both access and search patterns by generating each query randomly and independently of previous queries.
We prove that the communication overhead of $\OSSE$ can be a small constant when the keyword distribution in the dataset is uniform, and $O(\log |\Delta|)$, where $|\Delta|$ is the keyword universe, when the keyword and query distributions are Zipfian.
Although it has a large computation complexity, $\OSSE$ is easily parallelizable, and performs a search in a single communication round.
Our evaluation shows that $\OSSE$ is highly effective against current query identification attacks while providing high utility, and demonstrates the importance of hiding search patterns in privacy-preserving SSE schemes. 

\section*{Acknowledgment}
We gratefully acknowledge the support of NSERC for grants RGPIN-05849, CRDPJ-531191, IRC-537591 and the Royal Bank of Canada for funding this research.
Andreas Peter was supported by the Netherlands Organisation for Scientific Research (Nederlandse Organisatie voor Wetenschappelijk Onderzoek, NWO) in the context of the SHARE project.
This work benefited from the use of the CrySP RIPPLE Facility at the University of Waterloo.

\bibliographystyle{IEEEtranS}
\bibliography{dp-sse-biblio}

\begin{thebibliography}{10}
\providecommand{\url}[1]{#1}
\csname url@samestyle\endcsname
\providecommand{\newblock}{\relax}
\providecommand{\bibinfo}[2]{#2}
\providecommand{\BIBentrySTDinterwordspacing}{\spaceskip=0pt\relax}
\providecommand{\BIBentryALTinterwordstretchfactor}{4}
\providecommand{\BIBentryALTinterwordspacing}{\spaceskip=\fontdimen2\font plus
\BIBentryALTinterwordstretchfactor\fontdimen3\font minus
  \fontdimen4\font\relax}
\providecommand{\BIBforeignlanguage}[2]{{%
\expandafter\ifx\csname l@#1\endcsname\relax
\typeout{** WARNING: IEEEtranS.bst: No hyphenation pattern has been}%
\typeout{** loaded for the language `#1'. Using the pattern for}%
\typeout{** the default language instead.}%
\else
\language=\csname l@#1\endcsname
\fi
#2}}
\providecommand{\BIBdecl}{\relax}
\BIBdecl

\bibitem{akavia2018secure}
A.~Akavia, D.~Feldman, and H.~Shaul, ``Secure search on encrypted data via
  multi-ring sketch,'' in \emph{Proceedings of the 2018 ACM SIGSAC Conference
  on Computer and Communications Security}, 2018, pp. 985--1001.

\bibitem{akavia2019setup}
A.~Akavia, C.~Gentry, S.~Halevi, and M.~Leibovich, ``Setup-free secure search
  on encrypted data: Faster and post-processing free,'' \emph{Proceedings on
  Privacy Enhancing Technologies}, vol. 2019, no.~3, pp. 87--107, 2019.

\bibitem{apon2014verifiable}
D.~Apon, J.~Katz, E.~Shi, and A.~Thiruvengadam, ``Verifiable oblivious
  storage,'' in \emph{International Workshop on Public Key Cryptography}.\hskip
  1em plus 0.5em minus 0.4em\relax Springer, 2014, pp. 131--148.

\bibitem{azar1999balanced}
Y.~Azar, A.~Z. Broder, A.~R. Karlin, and E.~Upfal, ``Balanced allocations,''
  \emph{SIAM journal on computing}, vol.~29, no.~1, pp. 180--200, 1999.

\bibitem{bishop2015function}
A.~Bishop, A.~Jain, and L.~Kowalczyk, ``Function-hiding inner product
  encryption,'' in \emph{International Conference on the Theory and Application
  of Cryptology and Information Security}.\hskip 1em plus 0.5em minus
  0.4em\relax Springer, 2015, pp. 470--491.

\bibitem{blackstonerevisiting}
L.~Blackstone, S.~Kamara, and T.~Moataz, ``Revisiting leakage abuse attacks,''
  p. TBD, 2020.

\bibitem{bost2016ovarphiovarsigma}
R.~Bost, ``∑ o$\varphi$o$\varsigma$: Forward secure searchable encryption,''
  in \emph{Proceedings of the 2016 ACM SIGSAC Conference on Computer and
  Communications Security}.\hskip 1em plus 0.5em minus 0.4em\relax ACM, 2016,
  pp. 1143--1154.

\bibitem{cash2015leakage}
D.~Cash, P.~Grubbs, J.~Perry, and T.~Ristenpart, ``Leakage-abuse attacks
  against searchable encryption,'' in \emph{Proceedings of the 22nd ACM SIGSAC
  conference on computer and communications security}.\hskip 1em plus 0.5em
  minus 0.4em\relax ACM, 2015, pp. 668--679.

\bibitem{cash2013highly}
D.~Cash, S.~Jarecki, C.~Jutla, H.~Krawczyk, M.-C. Ro{\c{s}}u, and M.~Steiner,
  ``Highly-scalable searchable symmetric encryption with support for boolean
  queries,'' in \emph{Advances in cryptology--CRYPTO 2013}.\hskip 1em plus
  0.5em minus 0.4em\relax Springer, 2013, pp. 353--373.

\bibitem{chen2018differentially}
G.~Chen, T.-H. Lai, M.~K. Reiter, and Y.~Zhang, ``Differentially private access
  patterns for searchable symmetric encryption,'' in \emph{IEEE INFOCOM
  2018-IEEE Conference on Computer Communications}.\hskip 1em plus 0.5em minus
  0.4em\relax IEEE, 2018, pp. 810--818.

\bibitem{chor1995private}
B.~Chor, O.~Goldreich, E.~Kushilevitz, and M.~Sudan, ``Private information
  retrieval,'' in \emph{Proceedings of IEEE 36th Annual Foundations of Computer
  Science}.\hskip 1em plus 0.5em minus 0.4em\relax IEEE, 1995, pp. 41--50.

\bibitem{curtmola2011searchable}
R.~Curtmola, J.~Garay, S.~Kamara, and R.~Ostrovsky, ``Searchable symmetric
  encryption: improved definitions and efficient constructions,'' \emph{Journal
  of Computer Security}, vol.~19, no.~5, pp. 895--934, 2011.

\bibitem{demertzisseal}
I.~Demertzis, D.~Papadopoulos, C.~Papamanthou, and S.~Shintre, ``{SEAL}: Attack
  mitigation for encrypted databases via adjustable leakage,'' 2020.

\bibitem{devadas2016onion}
S.~Devadas, M.~van Dijk, C.~W. Fletcher, L.~Ren, E.~Shi, and D.~Wichs, ``Onion
  oram: A constant bandwidth blowup oblivious ram,'' in \emph{Theory of
  Cryptography Conference}.\hskip 1em plus 0.5em minus 0.4em\relax Springer,
  2016, pp. 145--174.

\bibitem{dwork2008differential}
C.~Dwork, ``Differential privacy: A survey of results,'' in \emph{International
  conference on theory and applications of models of computation}.\hskip 1em
  plus 0.5em minus 0.4em\relax Springer, 2008, pp. 1--19.

\bibitem{dwork2014algorithmic}
C.~Dwork, A.~Roth \emph{et~al.}, ``The algorithmic foundations of differential
  privacy,'' \emph{Foundations and Trends{\textregistered} in Theoretical
  Computer Science}, vol.~9, no. 3--4, pp. 211--407, 2014.

\bibitem{garg2016tworam}
S.~Garg, P.~Mohassel, and C.~Papamanthou, ``Tworam: efficient oblivious ram in
  two rounds with applications to searchable encryption,'' in \emph{Annual
  International Cryptology Conference}.\hskip 1em plus 0.5em minus 0.4em\relax
  Springer, 2016, pp. 563--592.

\bibitem{goh2003secure}
E.-J. Goh \emph{et~al.}, ``Secure indexes.'' \emph{IACR Cryptology ePrint
  Archive}, vol. 2003, p. 216, 2003.

\bibitem{goldreich1996software}
O.~Goldreich and R.~Ostrovsky, ``Software protection and simulation on
  oblivious rams,'' \emph{Journal of the ACM (JACM)}, vol.~43, no.~3, pp.
  431--473, 1996.

\bibitem{islam2012access}
M.~S. Islam, M.~Kuzu, and M.~Kantarcioglu, ``Access pattern disclosure on
  searchable encryption: Ramification, attack and mitigation.'' in \emph{NDSS},
  vol.~20, 2012, p.~12.

\bibitem{kamara2017boolean}
S.~Kamara and T.~Moataz, ``Boolean searchable symmetric encryption with
  worst-case sub-linear complexity,'' in \emph{Annual International Conference
  on the Theory and Applications of Cryptographic Techniques}.\hskip 1em plus
  0.5em minus 0.4em\relax Springer, 2017, pp. 94--124.

\bibitem{kamara2013parallel}
S.~Kamara and C.~Papamanthou, ``Parallel and dynamic searchable symmetric
  encryption,'' in \emph{International Conference on Financial Cryptography and
  Data Security}.\hskip 1em plus 0.5em minus 0.4em\relax Springer, 2013, pp.
  258--274.

\bibitem{kamara2012dynamic}
S.~Kamara, C.~Papamanthou, and T.~Roeder, ``Dynamic searchable symmetric
  encryption,'' in \emph{Proceedings of the 2012 ACM conference on Computer and
  communications security}.\hskip 1em plus 0.5em minus 0.4em\relax ACM, 2012,
  pp. 965--976.

\bibitem{katz2008predicate}
J.~Katz, A.~Sahai, and B.~Waters, ``Predicate encryption supporting
  disjunctions, polynomial equations, and inner products,'' in \emph{Annual
  International Conference on the Theory and Applications of Cryptographic
  Techniques}.\hskip 1em plus 0.5em minus 0.4em\relax Springer, 2008, pp.
  146--162.

\bibitem{kim2018function}
S.~Kim, K.~Lewi, A.~Mandal, H.~Montgomery, A.~Roy, and D.~J. Wu,
  ``Function-hiding inner product encryption is practical,'' in
  \emph{International Conference on Security and Cryptography for
  Networks}.\hskip 1em plus 0.5em minus 0.4em\relax Springer, 2018, pp.
  544--562.

\bibitem{larsen2018yes}
K.~G. Larsen and J.~B. Nielsen, ``Yes, there is an oblivious ram lower bound!''
  in \emph{Annual International Cryptology Conference}.\hskip 1em plus 0.5em
  minus 0.4em\relax Springer, 2018, pp. 523--542.

\bibitem{liu2014search}
C.~Liu, L.~Zhu, M.~Wang, and Y.-A. Tan, ``Search pattern leakage in searchable
  encryption: Attacks and new construction,'' \emph{Information Sciences}, vol.
  265, pp. 176--188, 2014.

\bibitem{moataz2015constant}
T.~Moataz, T.~Mayberry, and E.-O. Blass, ``Constant communication oram with
  small blocksize,'' in \emph{Proceedings of the 22nd ACM SIGSAC Conference on
  Computer and Communications Security}.\hskip 1em plus 0.5em minus 0.4em\relax
  ACM, 2015, pp. 862--873.

\bibitem{naveed2015fallacy}
M.~Naveed, ``The fallacy of composition of oblivious ram and searchable
  encryption.'' \emph{IACR Cryptology ePrint Archive}, vol. 2015, p. 668, 2015.

\bibitem{naveed2014dynamic}
M.~Naveed, M.~Prabhakaran, and C.~A. Gunter, ``Dynamic searchable encryption
  via blind storage,'' in \emph{Security and Privacy (SP), 2014 IEEE Symposium
  on}.\hskip 1em plus 0.5em minus 0.4em\relax IEEE, 2014, pp. 639--654.

\bibitem{pouliot2016shadow}
D.~Pouliot and C.~V. Wright, ``The shadow nemesis: Inference attacks on
  efficiently deployable, efficiently searchable encryption,'' in
  \emph{Proceedings of the 2016 ACM SIGSAC conference on computer and
  communications security}, 2016, pp. 1341--1352.

\bibitem{shen2009predicate}
E.~Shen, E.~Shi, and B.~Waters, ``Predicate privacy in encryption systems,'' in
  \emph{Theory of Cryptography Conference}.\hskip 1em plus 0.5em minus
  0.4em\relax Springer, 2009, pp. 457--473.

\bibitem{song2000practical}
D.~X. Song, D.~Wagner, and A.~Perrig, ``Practical techniques for searches on
  encrypted data,'' in \emph{Security and Privacy, 2000. S\&P 2000.
  Proceedings. 2000 IEEE Symposium on}.\hskip 1em plus 0.5em minus 0.4em\relax
  IEEE, 2000, pp. 44--55.

\bibitem{stefanov2013path}
E.~Stefanov, M.~Van~Dijk, E.~Shi, C.~Fletcher, L.~Ren, X.~Yu, and S.~Devadas,
  ``Path oram: an extremely simple oblivious ram protocol,'' in
  \emph{Proceedings of the 2013 ACM SIGSAC conference on Computer \&
  communications security}.\hskip 1em plus 0.5em minus 0.4em\relax ACM, 2013,
  pp. 299--310.

\bibitem{toledo2016lower}
R.~R. Toledo, G.~Danezis, and I.~Goldberg, ``Lower-cost $\epsilon$-private
  information retrieval,'' \emph{Proceedings on Privacy Enhancing
  Technologies}, vol. 2016, no.~4, pp. 184--201, 2016.

\bibitem{wagh2016root}
S.~Wagh, P.~Cuff, and P.~Mittal, ``Root oram: A tunable differentially private
  oblivious ram,'' \emph{arXiv preprint arXiv:1601.03378}, 2016.

\bibitem{wang2015circuit}
X.~Wang, H.~Chan, and E.~Shi, ``Circuit oram: On tightness of the
  {G}oldreich-{O}strovsky lower bound,'' in \emph{Proceedings of the 22nd ACM
  SIGSAC Conference on Computer and Communications Security}.\hskip 1em plus
  0.5em minus 0.4em\relax ACM, 2015, pp. 850--861.

\bibitem{zaslavskiy2008path}
M.~Zaslavskiy, F.~Bach, and J.-P. Vert, ``A path following algorithm for the
  graph matching problem,'' \emph{IEEE Transactions on Pattern Analysis and
  Machine Intelligence}, vol.~31, no.~12, pp. 2227--2242, 2008.

\bibitem{zhang2016all}
Y.~Zhang, J.~Katz, and C.~Papamanthou, ``All your queries are belong to us: The
  power of file-injection attacks on searchable encryption.'' in \emph{USENIX
  Security Symposium}, 2016, pp. 707--720.

\bibitem{zipf1932selected}
G.~K. Zipf, ``Selected studies of the principle of relative frequency in
  language,'' 1932.

\end{thebibliography}

\appendix

\subsection{Differential Privacy Guarantees of $\OSSE$}
\label{section:dpproof}

Before deriving the proof for Theorem~\ref{theorem:dp-dpsse}, we introduce some results on ratios of probability distributions.

\subsubsection{Upper Bounds on Probability Ratios}
\label{app:bounds}

Let $G\sim\Geometric{}{1-q}$ be a random variable that follows a geometric distribution defined over $\{0,1,2,\dots\}$, let $A\sim\Bernoulli{p}$ be a Bernoulli random variable, and let $B_n\sim\Binomial{n}{p}$ be a binomial random variable. We prove the following results.

\begin{lemma}
 For any non-negative integer $\alpha\in\mathbb{Z}_+$,
 \begin{align}
  \frac{\Pr(G+A=\alpha)}{\Pr(G=\alpha)}&\leq\frac{p+q(1-p)}{q}\,, \label{eq:GAvsG}\\
 \frac{\Pr(G=\alpha)}{\Pr(G+A=\alpha)}&\leq\frac{1}{1-p}\,. \label{eq:GvsGA}
 \end{align}
\end{lemma}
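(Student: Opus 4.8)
The plan is to write out the probability mass functions explicitly and reduce each inequality to an elementary bound. Recall $\Pr(G=k) = q^k(1-q)$ for $k\in\{0,1,2,\dots\}$, $\Pr(A=0)=1-p$, $\Pr(A=1)=p$. Since $G$ and $A$ are independent, the convolution $G+A$ has mass function $\Pr(G+A=\alpha) = \Pr(G=\alpha)\Pr(A=0) + \Pr(G=\alpha-1)\Pr(A=1)$ for $\alpha\geq 1$, and $\Pr(G+A=0)=\Pr(G=0)\Pr(A=0)=(1-p)(1-q)$. Plugging in, for $\alpha\geq 1$ we get $\Pr(G+A=\alpha) = q^\alpha(1-q)(1-p) + q^{\alpha-1}(1-q)p = q^{\alpha-1}(1-q)\bigl(q(1-p)+p\bigr)$.

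For \eqref{eq:GAvsG}, I would compute the ratio $\Pr(G+A=\alpha)/\Pr(G=\alpha)$ in two cases. For $\alpha\geq 1$, dividing the above by $\Pr(G=\alpha)=q^\alpha(1-q)$ gives $\bigl(q(1-p)+p\bigr)/q = (p+q(1-p))/q$ exactly. For $\alpha=0$, the ratio is $(1-p)(1-q)/((1-q)) \cdot (1/q) $... more carefully, $\Pr(G+A=0)/\Pr(G=0) = (1-p)(1-q)/(1-q) = 1-p \leq 1 \leq (p+q(1-p))/q$, where the last step holds since $p+q(1-p)\geq q$ is equivalent to $p(1-q)\geq 0$. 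So the bound holds in both cases, with equality for all $\alpha\geq 1$.

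For \eqref{eq:GvsGA}, I take reciprocals of the same expressions. For $\alpha\geq 1$ the ratio $\Pr(G=\alpha)/\Pr(G+A=\alpha) = q/(p+q(1-p))$; since $p+q(1-p)\geq q(1-p)+ \cdots$, actually more simply $p+q(1-p) \geq p + q - pq \geq q(1-p)$ is not what I want — I want to show $q/(p+q(1-p)) \leq 1/(1-p)$, i.e. $q(1-p) \leq p+q(1-p)$, i.e. $0\leq p$, which is immediate. For $\alpha=0$, the ratio is $(1-q)/((1-p)(1-q)) = 1/(1-p)$, attaining the bound with equality. Hence \eqref{eq:GvsGA} holds for all $\alpha$.

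I do not anticipate a genuine obstacle here: the memorylessness of the geometric distribution makes the $\alpha\geq 1$ ratios constant in $\alpha$, so the only care needed is handling the boundary case $\alpha=0$ separately and checking the two trivial inequalities $p(1-q)\geq 0$ and $p\geq 0$. The mild subtlety worth stating cleanly is that the $\alpha=0$ case gives a strictly smaller ratio for \eqref{eq:GAvsG} (so the stated bound is not tight there) but exactly meets the bound for \eqref{eq:GvsGA}; I would phrase the write-up so both the general-$\alpha$ computation and the $\alpha=0$ check are visible, since the tightness at $\alpha\geq 1$ (resp.\ $\alpha=0$) is what ultimately drives the differential privacy parameter in Theorem~\ref{theorem:dp-dpsse}.
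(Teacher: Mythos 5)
Your proof is correct and follows essentially the same route as the paper: an explicit case split on $\alpha=0$ versus $\alpha\geq 1$ using the convolution of the Bernoulli and geometric mass functions, with the observation that $1-p\leq\frac{p+q(1-p)}{q}$ handling the cross-cases. Your additional remarks on where each bound is tight are accurate but not needed for the lemma itself.
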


\begin{proof}
 The ratio in \eqref{eq:GAvsG} takes the following values depending on $\alpha$:
 \begin{equation*} \label{eq:berngeobound}
\Fitpage{
 \frac{\Pr(G+A=\alpha)}{\Pr(G=\alpha)}= \begin{cases}
        \frac{(1-p)(1-q)}{1-q} = 1 - p, &\text{if } \alpha=0;\\
        \frac{pq^{\alpha-1}(1-q) + (1-p)q^{\alpha}(1-q)}{q^{\alpha}(1-q)} = \frac{p + q(1-p)}{q}, &\text{if } \alpha>0\,.
    \end{cases}
		}
\end{equation*}
The bounds in \eqref{eq:GvsGA} and \eqref{eq:GAvsG} follow from the fact that $1-p<\frac{p + q(1-p)}{q}$.
\end{proof}

\begin{lemma}
 For any non-negative integers $\alpha, n\in\mathbb{Z}_+$,
 \begin{align}
  \frac{\Pr(G+B_{n+1}=\alpha)}{\Pr(G+B_{n}=\alpha)}&\leq\frac{p+q(1-p)}{q}\,, \label{eq:GBvsG}\\
  \frac{\Pr(G+B_{n}=\alpha)}{\Pr(G+B_{n+1}=\alpha)}&\leq\frac{1}{1-p}\,. \label{eq:GvsGB}
 \end{align}
\end{lemma}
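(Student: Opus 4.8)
The plan is to reduce this statement to the preceding lemma by peeling a single Bernoulli trial off the binomial. Since $B_{n+1}$ is distributed as $B_n+A'$ with $A'\sim\Bernoulli{p}$ independent of $G$ and $B_n$, the sum $G+B_{n+1}$ is distributed as $H+A'$, where $H\doteq G+B_n$. Conditioning on $A'$ gives, for every $\alpha\in\mathbb{Z}_+$ and with the convention $\Pr(H=-1)=0$,
\begin{equation*}
 \Pr(H+A'=\alpha)=(1-p)\Pr(H=\alpha)+p\Pr(H=\alpha-1)\,.
\end{equation*}
This single identity drives both bounds.

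For \eqref{eq:GvsGB} I would simply discard the nonnegative second term, so that $\Pr(H+A'=\alpha)\geq(1-p)\Pr(H=\alpha)$, which immediately yields $\Pr(H=\alpha)/\Pr(H+A'=\alpha)\leq 1/(1-p)$; this uses nothing about the shape of $H$. For \eqref{eq:GBvsG}, dividing the identity by $\Pr(H=\alpha)$ gives
\begin{equation*}
 \frac{\Pr(H+A'=\alpha)}{\Pr(H=\alpha)}=(1-p)+p\,\frac{\Pr(H=\alpha-1)}{\Pr(H=\alpha)}\,,
\end{equation*}
so it suffices to prove $q\Pr(H=\alpha-1)\leq\Pr(H=\alpha)$: substituting $\Pr(H=\alpha-1)/\Pr(H=\alpha)\leq 1/q$ produces exactly $(1-p)+p/q=(p+q(1-p))/q$. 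The case $\alpha=0$ is automatic, since then the ratio term vanishes and $1-p\leq(p+q(1-p))/q$ as in the previous lemma.

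To establish $q\Pr(H=\alpha-1)\leq\Pr(H=\alpha)$, I would expand $\Pr(H=\alpha)=\sum_k\Pr(B_n=k)\Pr(G=\alpha-k)$ and exploit the one-step decay of the geometric law: $\Pr(G=j)=q\Pr(G=j-1)$ for $j\geq 1$, and $\Pr(G=0)=1-q\geq 0=q\Pr(G=-1)$. Hence $\Pr(G=\alpha-k)\geq q\Pr(G=\alpha-1-k)$ for every $k$, and summing against the nonnegative weights $\Pr(B_n=k)$ gives $\Pr(H=\alpha)\geq q\Pr(H=\alpha-1)$. Equivalently, writing both pmfs as explicit finite sums over $k$, one sees that $q\Pr(H=\alpha-1)$ is, term for term, a sub-sum of $\Pr(H=\alpha)$.

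The main and essentially only nontrivial point is observing that convolving the geometric with a binomial does not destroy the property that the mass decays by at least a factor $q$ per step, so the ratio $\Pr(H=\alpha-1)/\Pr(H=\alpha)$ stays bounded by $1/q$ exactly as $\Pr(G=\alpha-1)/\Pr(G=\alpha)$ does. Once that is in place, the remaining manipulations mirror those in the proof of the preceding lemma.
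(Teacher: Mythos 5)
Your proof is correct and is essentially the paper's argument in probabilistic dress: peeling a Bernoulli trial off $B_{n+1}$ is exactly the paper's use of Pascal's rule ${n+1\choose \beta}={n\choose \beta}+{n\choose \beta-1}$ together with the change of variable $\beta'=\beta-1$, and your key inequality $q\Pr(H=\alpha-1)\leq\Pr(H=\alpha)$ (from the one-step decay of the geometric pmf) is precisely the paper's observation that the residual fraction after factoring out $p/q$ is a sub-sum ratio bounded by one. Likewise, obtaining \eqref{eq:GvsGB} by discarding the nonnegative term $p\Pr(H=\alpha-1)$ matches the paper's lower estimate $1-p$ on the same expression, so there is no gap.
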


\begin{proof}
 First, we expand the ratio:
 \begin{align}
  &\frac{\Pr(G+B_{n+1}=\alpha)}{\Pr(G+B_{n}=\alpha)}\\
  =& \frac{\sum_{\beta=0}^{\min(\alpha,n+1)} \Pr(B_{n+1}=\beta)\cdot \Pr(G=\alpha-\beta)}{\sum_{\beta=0}^{\min(\alpha,n)} \Pr(B_{n}=\beta)\cdot \Pr(G=\alpha-\beta)}\\
  =&\frac{\sum_{\beta=0}^{\min(\alpha,n+1)} {n+1\choose \beta} p^\beta(1-p)^{n-\beta+1} q^{\alpha-\beta}(1-q)}{\sum_{\beta=0}^{\min(\alpha,n)} {n\choose \beta} p^\beta(1-p)^{n-\beta} q^{\alpha-\beta}(1-q)}\,.\label{eq:appstep}
 \end{align}

 We use ${n+1\choose k}={n\choose k}+{n\choose k-1}$ with the convention that ${n\choose k}=0$ when $k<0$ or $k>n$, and we perform a change of variable $\beta'\doteq \beta-1$ in the numerator. 
Then, \eqref{eq:appstep} equals: 
 \begin{align}
  &=1-p+\frac{p}{q}\cdot \frac{\sum_{\beta'=0}^{\min(\alpha-1,n)} {n\choose \beta'} p^{\beta'}(1-p)^{n-\beta'} q^{\alpha-\beta'}(1-q)}{\sum_{\beta=0}^{\min(\alpha,n)} {n\choose \beta} p^\beta(1-p)^{n-\beta} q^{\alpha-\beta}(1-q)}\\
  &\begin{cases}
     =1-p\qquad&\text{ if }\alpha=0\,,\\
     =1-p+p/q&\text{ if }\alpha> n\,,\\
     \in (1-p,1-p+p/q)&\text{ otherwise.}
    \end{cases}
 \end{align}
 Therefore, the ratio is smaller than $1-p+p/q$, which proves \eqref{eq:GBvsG}, and its inverse is smaller than $1/(1-p)$, which proves \eqref{eq:GvsGB}.
\end{proof}

\begin{lemma}
 For any non-negative integers $\alpha, n, m\in\mathbb{Z}_+$,
\begin{align}
 \frac{\Pr(G+B_{n+m}=\alpha)}{\Pr(G+B_{n}=\alpha)}&\leq \left(\frac{p+q(1-p)}{q}\right)^m\,, \label{eq:GBvsG2}\\
 \frac{\Pr(G+B_{n}=\alpha)}{\Pr(G+B_{n+m}=\alpha)}&\leq\left(\frac{1}{1-p}\right)^m\,. \label{eq:GvsGB2}
\end{align}
\end{lemma}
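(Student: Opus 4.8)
The plan is to derive this lemma directly from the two previous lemmas by a simple telescoping (induction on $m$). The base case $m=1$ is exactly \eqref{eq:GBvsG} and \eqref{eq:GvsGB}, so there is nothing to do there. For the inductive step, I would write the ratio $\frac{\Pr(G+B_{n+m}=\alpha)}{\Pr(G+B_{n}=\alpha)}$ as a product of $m$ consecutive ratios of the form $\frac{\Pr(G+B_{k+1}=\alpha)}{\Pr(G+B_{k}=\alpha)}$ for $k=n,n+1,\dots,n+m-1$, i.e.
\begin{equation*}
\frac{\Pr(G+B_{n+m}=\alpha)}{\Pr(G+B_{n}=\alpha)}=\prod_{k=n}^{n+m-1}\frac{\Pr(G+B_{k+1}=\alpha)}{\Pr(G+B_{k}=\alpha)}\,.
\end{equation*}
Each factor is bounded above by $\frac{p+q(1-p)}{q}$ by \eqref{eq:GBvsG}, and there are $m$ of them, which yields \eqref{eq:GBvsG2}. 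Taking reciprocals (equivalently, applying \eqref{eq:GvsGB} to each factor) gives \eqref{eq:GvsGB2}.

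The key steps, in order, are: (1) observe that all the probabilities involved are strictly positive, so the telescoping product is well-defined and no division-by-zero issue arises; (2) state the product decomposition above; (3) invoke \eqref{eq:GBvsG} on each of the $m$ factors and multiply; (4) repeat with \eqref{eq:GvsGB} (or just invert) for the second inequality. One could alternatively phrase this as a one-line induction on $m$: assuming the bound for $m$, multiply through by the $m=1$ case applied to $B_{n+m}\to B_{n+m+1}$.

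I do not expect any real obstacle here — the statement is a routine corollary, and the only thing worth being slightly careful about is making sure the intermediate probabilities $\Pr(G+B_k=\alpha)$ are nonzero so the telescoping is legitimate, which holds because $G$ has full support on $\{0,1,2,\dots\}$ and hence $\Pr(G+B_k=\alpha)\geq \Pr(B_k=0)\Pr(G=\alpha)=(1-p)^k q^\alpha(1-q)>0$ whenever $0<p<1$ and $0<q<1$. So the plan is essentially to write down the telescoping identity and cite the preceding lemma $m$ times.
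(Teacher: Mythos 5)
Your proposal is correct and matches the paper's own proof essentially verbatim: the paper also writes the ratio as a telescoping product of $m$ consecutive ratios and applies \eqref{eq:GBvsG} (resp.\ \eqref{eq:GvsGB}) to each factor. Your added remark that the intermediate probabilities $\Pr(G+B_k=\alpha)$ are strictly positive is a harmless extra precaution the paper leaves implicit.
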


\begin{proof}
 We can expand the left ratio in \eqref{eq:GBvsG2} as
 \begin{equation}
  \frac{\Pr(G+B_{n+m}=\alpha)}{\Pr(G+B_{n}=\alpha)}=\prod_{k=1}^m \frac{\Pr(G+B_{n+k}=\alpha)}{\Pr(G+B_{n+k-1}=\alpha)}\,.
 \end{equation}
 Then, by applying the bound in \eqref{eq:GBvsG} to each of the $m$ terms of this product, we reach \eqref{eq:GBvsG2}. Likewise, we can prove \eqref{eq:GvsGB2} by applying \eqref{eq:GvsGB} $m$ times.
\end{proof}

\subsubsection{Differential Privacy for Documents}

We derive the $\epsilon$ guarantee that $\OSSE$ provides according to Definition~\ref{def:dpdocuments}. 
We use $\mathcal{M}$ to denote the randomized mechanism that takes the outsourced dataset $\mathcal{D}$ and a \emph{single} query for keyword $w$ and outputs the obfuscated access pattern for that $w$, i.e,. $\aprand[w]$. Recall that $\aprand[w]$ is a $n+|h|$-sized random vector characterized by \eqref{eq:dist1} and \eqref{eq:dist2}.

Let $\mathcal{D}$ and $\mathcal{D}'$ be two adjacent datasets that are identical except for their $k$th document ($\mathcal{D}[k]$ and $\mathcal{D}'[k]$, respectively). These documents differ in a single keyword. Let $w_*$ be the keyword that is only in $\mathcal{D}[k]$, and $w'_*$ the keyword that is only in $\mathcal{D}'[k]$. Let $\aprand[w]$ be the random output of $\mathcal{M}(\mathcal{D},w)$, and $\aprand[w]'$ be the random output of $\mathcal{M}(\mathcal{D}',w)$. Let $\apsamp[]$ be a particular observed obfuscated access pattern. 
Then, we want to find an upper bound for the ratio
\begin{equation}
\Fitpage{
 \frac{\Pr(\mathcal{M}(\mathcal{D},w)=\apsamp)}{\Pr(\mathcal{M}(\mathcal{D}',w)=\apsamp)}=\frac{\Pr(\aprand[w]=\apsamp[])}{\Pr(\aprand[w]'=\apsamp[])}=\prod\limits_{i=1}^{n+|h|}\frac{\Pr(\aprand[w][i]=\apsamp[][i])}{\Pr(\aprand[w]'[i]=\apsamp[][i])}
}
\end{equation}
If $w\notin\{w_*,w'_*\}$, then this ratio is equal to one. Now consider the case where $w=w_*$. In that case, the distribution of $\aprand[w][i]$ and $\aprand[w]'[i]$ is the same, except for $i=k$ (since $w$ is in $\mathcal{D}[k]$ but not in $\mathcal{D}'[k]$) and $i=l_k$, where $l_k=h(k)$ is the label of the $k$th document. Therefore, we have
\begin{equation}
\Fitpage{
 \frac{\Pr(\mathcal{M}(\mathcal{D},w_*)=\apsamp)}{\Pr(\mathcal{M}(\mathcal{D}',w_*)=\apsamp)}=\frac{\Pr(\aprand[w_*][k]=\apsamp[][k])}{\Pr(\aprand[w_*]'[k]=\apsamp[][k])}\cdot \frac{\Pr(\aprand[w_*][l_k]=\apsamp[][l_k])}{\Pr(\aprand[w_*]'[l_k]=\apsamp[][l_k])}
}
\end{equation}

Since $\aprand[w_*][k]\sim\Bernoulli{p}+\Geometric{}{1-q}$ and $\aprand[w_*]'[k]\sim\Geometric{}{1-q}$, we can bound the left coefficient using \eqref{eq:GAvsG}. Likewise, since $\aprand[w_*][l_k]\sim\Binomial{g}{p}+\Geometric{}{1-q}$ and $\aprand[w_*]'[l_k]\sim\Binomial{g+1}{p}+\Geometric{}{1-q}$ for some positive integer constant $g$, we can bound the right coefficient using \eqref{eq:GvsGB}. 

Therefore, using \eqref{eq:GAvsG} and  \eqref{eq:GvsGB} we obtain
\begin{equation} 
 \frac{\Pr(\mathcal{M}(\mathcal{D},w_*)=\apsamp)}{\Pr(\mathcal{M}(\mathcal{D}',w_*)=\apsamp)}\leq \frac{p + (1-p)q}{q}\cdot\frac{1}{1-p}=1+\frac{p}{q(1-p)}\,.
\end{equation}
We can follow the same procedure to prove the same bound when the client queries for $w_*'$.

Finally, if we consider a sequence of $t$ queries, in the worst case all of them are for either $w_*$ or $w'_*$, so the differential privacy guarantee according to Definition~\ref{def:dpdocuments} is
\begin{equation} \label{eq:epsdocs}
 \epsilon=\ln\left(1+\frac{p}{q(1-p)}\right)\,.
\end{equation}

Using $\TPR=p+q(1-p)$ and $\FPR=q$, we obtain the value in Theorem~\ref{theorem:dp-dpsse}.

\subsubsection{Differential Privacy for Keywords}

Consider a database $\mathcal{D}$, and a pair of neighbouring keyword lists $\Vec{w}, \Vec{w}' \in \Delta^{|\Vec{w}|}$ that are identical except for their $k$th element, that is $w$ in $\Vec{w}$ and $w'$ in $\Vec{w}'$. Let $\aprand[w]$ be the output of $\mathcal{M}(\mathcal{D},w)$ and $\aprand[w']$ be the output of $\mathcal{M}(\mathcal{D},w')$. 
Then, we want to find an upper bound for
\begin{equation}
\Fitpage{
 \frac{\Pr(\mathcal{M}(\mathcal{D},w)=\apsamp)}{\Pr(\mathcal{M}(\mathcal{D},w')=\apsamp)}=\frac{\Pr(\aprand[w]=\apsamp[])}{\Pr(\aprand[w']=\apsamp[])}=\prod\limits_{i=1}^{n+|h|}\frac{\Pr(\aprand[w][i]=\apsamp[][i])}{\Pr(\aprand[w'][i]=\apsamp[][i])}\,.
}
\end{equation}
For $i\in[n]$, the distribution of $\aprand[w][i]$ and $\aprand[w'][i]$ will be different if $\mathcal{D}[i]$ contains only one of $\{w,w'\}$. Let $D_{0,1}$ be the indices of documents that do not contain $w$ but contain $w'$. For $i\in D_{0,1}$, $\aprand[w][i]\sim\Geometric{}{1-q}$ and $\aprand[w'][i]\sim\Bernoulli{p}+\Geometric{}{1-q}$. The ratio between the probabilities of these distributions can be bounded using \eqref{eq:GvsGA}. Likewise, define $D_{1,0}$ (indices of documents that contain $w$ but do not contain $w'$), $D_{1,1}$ (contain both), and $D_{0,0}$ (contain neither). Using \eqref{eq:GAvsG} and \eqref{eq:GvsGA} we can get:
\begin{equation}  
 \frac{\Pr(\aprand[w][i]=\apsamp[][i])}{\Pr(\aprand[w'][i]=\apsamp[][i])}\leq
 \begin{cases}
    1 &\text{if }i\in D_{0,0}\\
    \frac{1}{1-p} &\text{if }i\in D_{0,1}\\ 
    \frac{p+q(1-p)}{p} &\text{if }i\in D_{1,0}\\
    1 &\text{if }i\in D_{1,1}\\    
 \end{cases}
\end{equation}
Therefore,
\begin{equation} \label{eq:dpkw1}
\Fitpage{
 \prod\limits_{i=1}^n \frac{\Pr(\aprand[w][i]=\apsamp[][i])}{\Pr(\aprand[w'][i]=\apsamp[][i])} \leq \left(\frac{1}{1-p}\right)^{|D_{0,1}|} \left(\frac{p+q(1-p)}{p}\right)^{|D_{1,0}|}\,.
}
\end{equation}

Now, we focus on the variables $\aprand[w][i+n]$ for $i\in[|h|]$. Let $D_{0,1}^i$ be the indices of the documents in $D_{0,1}$ whose label is $i$ (same for the other sets $D_{0,0}$, $D_{1,0}$, and $D_{1,1}$, for $i\in[|h|]$). Then, we can write
\begin{align}
 \aprand[w][i+n] &\sim \Binomial{g_{i+n}+|D_{0,1}^i|}{p} + \Geometric{}{1-q}\,;\\
 \aprand[w'][i+n] &\sim \Binomial{g_{i+n}+|D_{1,0}^i|}{p} + \Geometric{}{1-q}\,,
\end{align}
for some constants $g_{1+n}, g_{2+n}, \dots, g_{|h|+n}$. If $|D_{0,1}^i|<|D_{1,0}^i|$, we can define a different set of constants $g'_{i+n}\doteq g_{i+n}+|D_{0,1}^i|$ and write
\begin{align}
 \aprand[w][i+n] &\sim \Binomial{g'_{i+n}}{p} + \Geometric{}{1-q}\,;\\
 \aprand[w'][i+n] &\sim \Binomial{g'_{i+n}+|D_{1,0}^i|-|D_{0,1}^i|}{p} + \Geometric{}{1-q}\,,
\end{align}
Using \eqref{eq:GvsGB2}, we can write
\begin{equation}  \label{eq:dpkw2}
 \frac{\Pr(\aprand[w][i+n]=\apsamp[][i+n])}{\Pr(\aprand[w'][i+n]=\apsamp[][i+n])}\leq \left(\frac{1}{1-p}\right)^{|D_{1,0}^i|-|D_{0,1}^i|}
\end{equation}

On the contrary, if $|D_{0,1}^i|>|D_{1,0}^i|$, using \eqref{eq:GBvsG2} we get
\begin{equation}  \label{eq:dpkw3}
 \frac{\Pr(\aprand[w][i+n]=\apsamp[][i+n])}{\Pr(\aprand[w'][i+n]=\apsamp[][i+n])}\leq \left(\frac{p+q(1-p)}{q}\right)^{|D_{0,1}^i|-|D_{1,0}^i|}
\end{equation}

Now, using the fact that $\sum_{i=1}^{|h|} |D_{0,1}^i|=|D_{0,1}|$ and $\sum_{i=1}^{|h|} |D_{1,0}^i|=|D_{1,0}|$, we write
\begin{equation} \label{eq:dpkw4}
 \prod_{i=1}^{|h|} \frac{\Pr(\aprand[w][i+n]=\apsamp[][i+n])}{\Pr(\aprand[w'][i+n]=\apsamp[][i+n])} \leq \left({\scriptstyle \frac{1}{1-p}}\right)^{|D_{1,0}|} \left({\scriptstyle \frac{p+q(1-p)}{q}}\right)^{|D_{0,1}|}\,.
\end{equation}
Here, we have used the fact that \eqref{eq:dpkw2} and \eqref{eq:dpkw3} are maximized when $D_{0,1}^i$ is empty when $D_{1,0}^i$ is not, and vice-versa. Finally, multiplying \eqref{eq:dpkw1} and \eqref{eq:dpkw4}, we get
\begin{align}
 \frac{\Pr(\mathcal{M}(\mathcal{D},w)=\apsamp)}{\Pr(\mathcal{M}(\mathcal{D},w')=\apsamp)}= &\leq \left({\frac{1}{1-p}\cdot\frac{p+q(1-p)}{q}}\right)^{|D_{0,1}|+|D_{1,0}|}\,,
\end{align}
and therefore according to Definition~\ref{def:dpkeywords},
\begin{equation} \label{eq:epskw}
 \epsilon=\ln\left(1+\frac{p}{q(1-p)}\right)\,,
\end{equation}
which implies Theorem~\ref{theorem:dp-dpsse}.

Note that our $\epsilon$ bounds in \eqref{eq:epsdocs} and \eqref{eq:epskw} can be made smaller by allowing a small probability of failure $\delta$ and using the advanced composition rule \cite{dwork2014algorithmic} of differential privacy.

\subsection{Analysis of $\countermax$ in $\OSSE$.}
\label{app:countermax1}

This analysis is equivalent to the balls-and-bins problem with $\freqmax$ balls and $\freqmax$ bins. We assume $|h|=\freqmax$ and that the hash function outputs values uniformly at random. We sat that the search index construction \emph{succeeds} if we pick a $\countermax$ value that is \emph{strictly larger} (since the counter starts at 0) than the maximum number of documents that share a label and keyword in common. Then, we want to prove that, when $\countermax=c\cdot\ln\freqmax/\ln\ln\freqmax$ (for some constant $c$ that we will determine), the probability of success is overwhelming (larger than $1-1/n$).

Let $S$ denote the success event, and let $n_{i,j}$ be the number of documents that have keyword $w_{(i)}$ and label $j$ ($i\in[|\Delta|], j\in[\freqmax]$). Then,
\begin{align}
 \Pr(S) &=\Pr\left(\cap_{i=1}^{|\Lambda|}\cap_{j=1}^{\freqmax} \{n_{i,j}< \countermax\}\right)\\
        &=1-\Pr\left(\cup_{i=1}^{|\Lambda|} \{\max_j n_{i,j}\geq \countermax\}\right)\\
        &\geq 1-\sum_{i=1}^{|\Lambda|}\cdot\Pr\left(\max_j n_{i,j}\geq \countermax\right)\,.
\end{align}

Now, we bound $\Pr(\max_j n_{i,j}\geq \countermax)$. Note that, for a keyword $w_{(i)}$, the maximum number of documents that have that keyword is $\freqmax$.
\begin{align}
 \Pr(\max_j n_{i,j}\geq \countermax)&\leq{\freqmax \choose \countermax} \left(\frac{1}{\freqmax}\right)^\countermax\\
&\leq\left(\frac{e}{\countermax}\right)^{\countermax}\,.
\end{align}
Now, using that $\countermax=c\cdot\ln\freqmax/\ln\ln\freqmax$,
\begin{align*}
 &\Pr(\max_j n_{i,j}\geq \countermax)\\
\leq& \left(\frac{e}{\countermax}\right)^\countermax\\
  =&\exp\left(\frac{c\cdot\ln\freqmax}{\ln\ln\freqmax}\cdot\ln\frac{e\ln\ln\freqmax}{c\ln\freqmax}\right)\\
  \stackrel{(a)}{\leq}& \exp\left(\frac{c\cdot\ln\freqmax}{\ln\ln\freqmax}\cdot(\ln\ln\ln\freqmax-\ln\ln\freqmax)\right)\\
  =&\exp\left(-c\ln\freqmax+\frac{c\ln\freqmax\cdot\ln\ln\ln\freqmax}{\ln\ln\freqmax}\right)\\
  \leq& \exp(-(c-1)\ln\freqmax) = \frac{1}{\freqmax^{c-1}}\,,
\end{align*}
where $(a)$ holds when $c\geq e$.
Then,
\begin{equation}
 \Pr(S)\geq 1-|\Lambda|\cdot\Pr(\max_j n_{i,j}\geq \countermax)\geq 1-\frac{|\Lambda|}{\freqmax^{c-1}}\,.
\end{equation}
This probability is larger than $1-1/n$ when
\begin{equation}
 c\geq\frac{\ln|\Delta|+\ln n}{\ln\freqmax}+1\,.
\end{equation}
Since $n\gg \freqmax$ and typically $|\Delta|\geq \freqmax$, then the assumption that $c\geq e$ in $(a)$ above holds.

This implies that we need
\begin{align}
 \countermax &= \left(\frac{\ln|\Delta|+\ln n}{\ln\freqmax}+1\right)\cdot \frac{\ln \freqmax}{\ln \ln\freqmax}\\
 &= \frac{\ln |\Delta|+\ln n+\ln \freqmax}{\ln\ln\freqmax}\leq \frac{3\cdot\ln n}{\ln\ln\freqmax}
\end{align}

\subsection{Analysis of Expected Number of Documents with a Keyword $w$ Under Different Distributions}
\label{app:distributions}

We study $E_w\doteq\text{E}\{\mathcal{D}(w)\}$, i.e., the expected number of documents that have a particular keyword $w$, for different keyword and query distributions.

The general expression for $E_w$ is
\begin{equation}
 E_w=\sum_{i=1}^{|\Delta|} \Pr(w_{(i)})\cdot |\mathcal{D}(w)|\,,
\end{equation}
where $\Pr(w_{(i)})$ is the probability that the client queries for keyword $w_{(i)}$.

\subsubsection{Uniform Distribution}
When all keywords have the same (maximum) frequency of appearance $|\mathcal{D}(w)|=\freqmax$ then, regardless off $\Pr(w_{(i)})$, we have
\begin{equation}
 E_w=\sum_{i=1}^{|\Delta|} \Pr(w_{(i)})\cdot \freqmax=\freqmax\,.
\end{equation}

\subsubsection{Zipfian Distribution}

Zipf's law states that the frequency of a individual word in a corpus of natural language utterances is inversely proportional to its rank (the position of it in a sorted list in decreasing order of frequency)~\cite{zipf1932selected}. Assume the keywords in $\Delta$ are sorted in descending frequency order. We know that $|\mathcal{D}(w_{(1)})|=\freqmax$. Then, if keyword frequency follows Zipf's law, we can write
\begin{equation}
 |\mathcal{D}(w_{(i)})|=\frac{\freqmax}{i}\,.
\end{equation}
When query frequencies also follow Zipf's law, we get
\begin{equation}
 \Pr(w_{(i)})=\frac{1}{i\cdot H_{|\Delta|}}\,,\quad\text{ where }H_{|\Delta|}\doteq\sum_{i=1}^{|\Delta|}\frac{1}{i}\,.
\end{equation}

Then,
\begin{equation}
 E_w=\sum_{i=1}^{|\Delta|} \frac{1}{i\cdot H_{|\Delta|}}\cdot \frac{\freqmax}{i}\approx\frac{\freqmax}{\ln{|\Delta|}+\gamma}\cdot \frac{\pi^2}{6}\,,
\end{equation}
where we have used that the harmonic number $H_n\approx \log n + \gamma$ where $\gamma\approx 0.58$ is the Euler-Mascheroni constant, and $\sum_{i=1}^{|\Delta|} i^{-2}\approx \pi^2/6$.

\subsubsection{Worst-Case Distribution}

$\OSSE$'s communication overhead is inversely proportional to $E_w$, so the worst-case distribution for $\OSSE$ is the one that minimizes $E_w$. Since each keyword is in at least one document, and the maximum keyword frequency is $\freqmax$, the worst-case will happen when there is a set of keywords with frequency one and those keywords are the only ones that the client queries, so $E_w\approx 1$.

\subsection{Adapting co-occurrence attacks against $\OSSE$ and $\CLRZ$.}
\label{app:adapt}

In our evaluation, we consider three attacks that use co-occurrence and volume information, namely IKK~\cite{islam2012access}, count attack~\cite{cash2015leakage}, and graph matching~\cite{pouliot2016shadow}.
These attacks compute the co-occurrence matrices $M$ and $M'$ from the observations and the auxiliary (training) information, respectively, and use these matrices to find an assignment of queries to keywords.
We explain how to compute these matrices, and then we delve into specifics of each attack.

First, consider a case where no defense is applied.
Let $m$ be the number of \emph{distinct} obfuscated access patterns observed by the adversary ($m\leq|\Delta|$), and let $\aprand[(i)]$ be the $i$th distinct obfuscated access pattern observed ($i\in[m]$).
Recall that $n$ is the number of documents in the dataset.
Then, $M$ is an $m\times m$ matrix such that its $i,j$th entry contains the \emph{probability} that a document is returned for both the $i$th and $j$th distinct queries.
Mathematically,
\begin{equation} \label{eq:M}
		M[i][j]=\langle \aprand[(i)],\aprand[(j)]\rangle/n\,,
\end{equation}
where $\langle\cdot,\cdot\rangle$ is the dot product.

Matrix $M'$ is a $|\Delta|\times|\Delta|$ matrix whose $i,j$th entry contains the probability that a document contains both $w_{(i)}$ and $w_{(j)}$, estimated using the training set.
Let $\ap[(i)]$ be the true access pattern of keyword $w_{(i)}$ in the training data and let $n'$ be the total number of documents in the training data.
Then,
\begin{equation} \label{eq:Mp_naive}
	M'[i][j]=\langle \ap[(i)],\ap[(j)]\rangle/n'\,.
\end{equation}

In order to adapt the attacks against $\CLRZ$ and $\OSSE$, we modify the computation of $M'$ by taking the true positive and false positive rates ($\TPR$ and $\FPR$) into account.
A document is returned as response for queries $w_{(i)}$ and $w_{(j)}$ with probability $\TPR^2$ when that document contains both keywords, with probability $\TPR\cdot\FPR$ when it only contains one of the keywords, and with probability $\FPR^2$ when it contains neither.
Mathematically, let $n_{i,j}\doteq \langle \ap[(i)],\ap[(j)]\rangle/n'$ be the normalized number of documents that have both $w_{(i)}$ and $w_{(j)}$, and let $\bar{n}_{i,j}\doteq\langle 1-\ap[(i)],1-\ap[(j)]\rangle/n'$ be the normalized number of documents that do not have $w_{(i)}$ nor $w_{(j)}$.
Then, we update the computation of $M'$ as follows:
\begin{equation} \label{eq:Mp}
  M'_{i,j}=\begin{cases} i\neq j:\quad&\TPR^2\cdot n_{i,j} + \FPR^2\cdot \bar{n}_{i,j}\\
	&+ \TPR\cdot\FPR\cdot(1-n_{i,j}-\bar{n}_{i,j})\,,\\
	  i=j:\quad&\TPR\cdot n_{i,i} + \FPR \cdot \bar{n}_{i,i}\,.
		\end{cases}
\end{equation}

All the attacks we evaluate, except for the frequency attack~\cite{liu2014search}, use these matrices.
We explain particularities of each attack below.

\subsubsection{\textbf{Adapting IKK}}

The naive version of the attack, that we simply call $\IKK$ in our experiments, uses $M$ and $M'$ as in \eqref{eq:M} and \eqref{eq:Mp_naive}, respectively.
Against $\CLRZ$, we use the original implementation~\cite{islam2012access} where each distinct access pattern is assigned to a unique distinct keyword.
Against $\OSSE$, however, the same keyword can generate different access patterns.
Therefore, in that case we modify the heuristic IKK annealing algorithm so that it allows assigning the same keyword to different observed access patterns.

The improved version of the attack, that we call $\IKKS$ in our experiments, uses $M'$ from \eqref{eq:Mp}.

\subsubsection{\textbf{Adapting the count attack}}

The count attack~\cite{cash2015leakage} first builds a list of candidate keywords for each observed query based on the query response volume and background information about the dataset (in our experiments, we give the adversary the full plaintext database).
Then, the count attack rules out assignments using co-occurrence information until all of the queries are disambiguated and only one possible matching remains.

Cash et al.~propose a generalization of the attack when the background information is imperfect.
In this variation, they assume that the number of documents returned in response to a query follows a Binomial distribution parametrized by the background (training) information, and use confidence intervals derived from Hoeffding bounds to build keyword candidates and disambiguate queries.
This version of the count attack applies naturally against $\CLRZ$ and $\OSSE$, since adding false positives and false negatives to the access patterns actually causes the query volume and co-occurrences to follow a Binomial distribution.

The attack against $\CLRZ$ proceeds as follows:
\begin{enumerate}
\item Compute the co-occurrence matrices $M$ and $M'$ as in \eqref{eq:M} and \eqref{eq:Mp}, respectively.
\item Compute the confidence intervals such that the observed volumes and co-occurences (in $M$) are within that interval (centered around the values in $M'$) with probability $p=0.95$ (c.f.~\cite{cash2015leakage}, Theorem 4.1).
\item Build the set of candidate keywords for each observed query based on the volume information (diagonal elements of $M$ and $M'$).
If any candidate set is empty, the attack fails.
\item Disambiguate queries by following the brute-force approach by Cash et al.~(c.f.~\cite{cash2015leakage}, Section 4.4). 
We first select the set of the 10 most-frequently observed queries, build all the possible assignments of those queries to their candidate keywords, and try to disambiguate queries for each of those assignments using co-occurrence information.
The attack returns the assignment that disambiguated the highest number of queries.
If \emph{all} of the assignments resulted in an inconsistency (the candidate set of a keyword became empty), the attack fails.
\end{enumerate}

Running the count attack against $\OSSE$ is not straightforward, since the attack heuristics assume that each distinct query belongs to a different keyword.
In $\OSSE$, however, the same keyword will likely generate different access patterns every time it is queried.
Therefore, we apply the clustering technique that we used for the frequency attack in Section~\ref{sec:freq}: the adversary takes all the observed access patterns and clusters them into $N_c$ groups, where $N_c$ is the true number of distinct queries issued by the client (we give the adversary this ground-truth information so that its performance is a worst-case against $\OSSE$).
Then, the adversary takes the center of each cluster as the ``representative'' access pattern of that group (note that this is still consistent with \eqref{eq:Mp}) and runs the attack as explained above.
All the elements within a cluster are assigned the keyword of their representative access pattern.

\subsubsection{\textbf{Adapting the graph matching attack}}

The graph matching attack~\cite{pouliot2016shadow} uses keyword co-occurrence information, similar to IKK.
We evaluate the adapted attack only, i.e., using $M$ and $M'$ from \eqref{eq:M} and \eqref{eq:Mp}, respectively.
Since this attack also assumes that each distinct access pattern is assigned to a different keyword, we use the clustering technique to group access patterns in $\OSSE$ and then run the graph matching algorithm using the cluster centers.

\end{document}